\newcommand\footnoteref[1]{\protected@xdef\@thefnmark{\ref{#1}}\@footnotemark}
\theoremstyle{plain}
\newtheorem{thm}{Theorem}
\newtheorem{pro}[thm]{Proposition}
\newtheorem{lem}[thm]{Lemma}
\newtheorem{cor}[thm]{Corollary}
\newtheorem{facts}[thm]{Fact}
\theoremstyle{definition}
\theoremstyle{remark}
\newtheorem{rem}[thm]{{\it Remark}}
\newcounter{dc}
\DeclareMathOperator{\okr}{{\stackrel{{\scriptscriptstyle{\mathsf{def}}}}{=}}}
\DeclareMathOperator{\D}{d\!}
\DeclareMathOperator{\E}{e} \DeclareMathOperator{\I}{i}
   \DeclareMathOperator{\IM}{\mathfrak{Im}}
\DeclareMathOperator{\lin}{lin}
\DeclareMathOperator{\clolin}{clolin}
\def\dz#1{\mathcal D({#1})}
\def\funk#1#2#3{#1\colon#2\to#3}
\def\Ge{\geqslant}
\def\is#1#2{\langle#1,#2\rangle}
\def\jd#1{\mathcal N(#1)}
\def\Le{\leqslant}
\def\liczp#1{{${#1}^{\text {\rm o}}$}}
\def\ob#1{\mathcal R(#1)}
\def\poch#1#2#3{\frac{\D{\!\,}^{#3}{#1}}{\D{#2}^{#3}}}
\def\res#1{|_{#1}}
\def\rres#1{\!\!\upharpoonright_{#1}}
\def\sbar#1{\,\overline{\!#1}}
\def\ulamek#1#2{\mbox{\normalfont$\frac{#1}{#2}$}}
\def\zb#1#2{\{{#1}\colon\ {#2}\}}
\def\aac{\mathcal A}
\def\bbc{\mathcal B}
\def\ccc{\mathcal C}
\def\ddc{\mathcal D}
\def\eec{\mathcal E}
\def\hhc{\mathcal H}
\def\kkc{\mathcal K}
\def\llc{\mathcal L}
\def\qqc{\mathcal Q}
\def\ccb{\mathbb C}
\def\rrb{\mathbb R}
\def\ggf{\mathfrak G}
\begin{document}

\title[Squeezing of arbitrary order]
{Squeezing of arbitrary order: the ups and downs}
\author{
Katarzyna G\'{o}rska$^{1}$, Andrzej Horzela$^{1}$ and Franciszek Hugon Szafraniec$^{2}$}

\address{$^{1}$H. Niewodnicza\'{n}ski Institute of Nuclear Physics, Polish Academy of Sciences, Division of Theoretical Physics, ul. Eliasza-Radzikowskiego 152, PL 31-342 Krak\'{o}w, Poland\\
$^{2}$Instytut Matematyki, Uniwersytet Jagiello\'{n}ski, ul. \L ojasiewicza 6,
PL 30 348 Krak\'ow, Poland}


\keywords{orthogonal polynomial, Hermite polynomial, Meixner-Pollaczek polynomial, creation operator, annihilation operator, displacement operator, squeeze operator, deficiency index, $\ccc^{\infty}$- vector, selfadjoint extension, essential selfadjointness, configuration space, Segal-Bargmann space}


\begin{abstract}
We show how using classical von Neumann index theory makes it possible a universal treatment of squeezing of arbitrary order. ``Universal'' means that the same approach applied to displacement (order $1$) and squeeze (order $2$) operators confirms toughly what is already known as well as provides rigorous arguments that the higher order squeezing can not be generalized in a  ``naive'' way. We create an environment for answering \underbar{definitely} all the emerging questions 
in positive ({\bf the ups}) and negative ({\bf downs}). In the latter case we suggest ways for further development.

\end{abstract}
\maketitle

In the eighties a tendency to generalize squeezed states, and squeeze operators in particular, to higher orders became present in the  literature  \cite{brau,nieto}. The authors 
discussed, not expected by physicists, impossibility of exponentiating the operators $A_{\xi}^{(k)}=i{\xi}^{*}a_{-}^{k}- i{\xi}a_{+}^{k}$, $k\ge 3$ basing their arguments on  
showing non-analycity of the vacuum state. The latter is however not decisive for the lack of selfadjointness of those $A_{\xi}^{(k)}$s and creates a problem to be explained, cf. \cite{nagel}. The basic requirement therein, namely normalizability of squeezed states defined via the Bogolubov transform of $ a_{-}^{k}$, turns out to be misleading if $k\ge 3$. In the recent   paper \cite{yzz} solutions to the Schr\"odinger equation for squeezed harmonic oscillators, considered in the Segal-Bargmann space, have been shown to be non-normalizable for $k\ge 3$.  
This not only remains in contradiction to what is in \cite{nagel} but also confirms
earlier findings for similar $k$-photon Rabi model with the interaction $\sigma_{x}(g^{*}a_{-}^{k}+ga_{+}^{k})$ where $\sigma_{x}$ is the Pauli 
matrix. It has been known for several years \cite{cflo98} that this model suffers, for
$k\ge 3$, from analogous pathologies  as the generalized squeezing does.
Discussion of the above, recently quite extensive \cite{yzz,dajka1,dajka2,comments,com2, gar,lo}, does not get 
rid of  difficulties arisen 
as   the crucial question
of  selfadjointness of  $A_{\xi}^{(k)}$,
$k\ge 3$, is left untouched. An attempt at compensating the lack of selfadjointness   with  suitable modifications, like  selfadjoint extensions, may lead to
physically important consequences.


The aim of our paper is to provide both communities, physicists and mathematicians, with adequate grounds for settling the appearing inconsistency. As a kind of surprise the main tool which works perfectly for this purpose turns out to be very classical and it is nothing but the von Neumann deficiency index approach. It makes the answers definite although reached after rather laborious calculations which we are presenting in detail so as to maintain mathematical rigor and to encourage others to follow.

We begin with preliminary notions to fix the language to be used. The main tool is to investigate essential selfadjointnesss of the operators $A_{\xi}^{(k)}$. We show that though the analytic vectors approach works well for $k=1,2$ is not sufficient to judge the problem for $k\Ge 3$. It is the von Neumann index theory which covers both cases giving definite answers: affirmative for $k=1,2$ and negative for $k\ge 3$. This streghtens universality of the apparatus we have chosen; all this is contained in Sections 2 and 3. Section 4 is devoted to analysis of possible subtleness' appearing in the process of exponentiation of  $A_{\xi}^{(k)}$ and related operators. In section 5 we go back to the case $k=1,2$ modeling them comprehensively in the Segal-Bargmann space. The paper is completed by concluding remarks in which we sum up its mathematical  aspects as well as briefly discuss their physical consequences and the Appendix containing a substantial part of calculations needed in the Section 2.

\section{Preliminaries}
\subsection{Basic notions}\label{basic1}
Let $\hhc$ be a Hilbert space. For an operator $A$ in $\hhc$,  $\dz A$ denotes its domain, $\ob A$ its range and $\jd A$ its null space (the kernel). $\sbar A$ stands always for  the closure
 of a closable operator $A$ and $A^{*}$ for its Hilbert space adjoint.

 If $\ddc\subset\dz A$ then the operator $A\res\ddc$ defined by $\dz{A\res\ddc}\okr \ddc$ and ${A\res\ddc}f\okr Af$, $f\in\ddc$, can be viewed as a {\em restriction} of $A$ to $\ddc$ and $A$ can be considered as an {\em extension} of $A\res\ddc$; this is a standard set theoretical notion. If $\ddc$ is dense in $\hhc$ then both $A$ and $A\res\ddc$ are a densely defined operators in $\hhc$  (that is $A\res\ddc$ acts within the same space $\hhc$ as $A$ does).

A linear subspace
 $\ddc$ of $\dz A$ is said to be a {\em core} of a closable operator $A$ if $\overline{A\res\ddc}=\sbar A$.

Furthermore, a subspace
$\ddc\subset\dz A$ is said to be {\it invariant}
for $A$ if $A\ddc\subset\ddc$. If this happens,   $A\res\ddc$ can also be thought of as a densely defined operator acting in the Hilbert space $\sbar \ddc$, the closure of $\ddc$. Again if $\ddc$ is dense in $\hhc$ the only difference between this and the previous case is  that in the latter $\ob{A}\subset\ddc$.
 
  On  the  other  hand,  
  a  \underbar{closed} subspace $\mathcal L$ of $\hhc$ is called  {\it invariant}
    for $A$  if
$A(\mathcal L\cap\dz A) \subset\llc$; then the  {\em restriction}
$A\rres{\mathcal L}\okr A\res{\mathcal L\cap\dz A}$ is always considered as an operator in $\llc$. If $\ddc$ in the above is closed then $A\res\ddc=A\rres\ddc$. A step
further, a closed subspace ${\mathcal L}$ {\it  reduces}  an
operator $A$ if both $\mathcal L$ and $\mathcal L^\perp$ are
invariant for $A$
as well as $P\dz A\subset\dz A$, where $P$ is the orthogonal
projection of $\hhc$ onto $\mathcal L$; all this is the
same  as  to  require  $P  A\subset  AP$.  If this happens, the   restriction
$A\rres{\mathcal L}$ is  called  a  {\it  part}  of  $A$  in
$\mathcal L$. If $\llc$ reduces $A$ and $A$ is densely defined then so is $A\rres\llc$.

Notice the word ``invariant'' has double meaning here but the circumstances we use it protect us from any confusion. 

\subsection{The operators}\label{s1.10.11}
Now let $\hhc$ be  a separable  Hilbert space (with the inner product to be linear in the first varaiable)
 and $(e_{n})_{n=0}^{\infty}$ be an orthonormal basis (i.e. an orthonormal complete set) 
in it\,\footnote{\; We follow the customary mathematical notation. In order to make our mathematical reasoning more clear we intentionally do not use the Fock space notation. We  believe it does not cause any problem for the physicists.}. 
The (abstract) {\em creation}  and {\em annihilation} operators  (with respect to the orthonormal basis $(e_{n})_{n=0}^{\infty}$) are  linearly extended from
 \begin{align}\label{a}
   \begin{gathered}\dz{a_{+}}=\dz{a_{-}}=\ddc\okr\lin\big(e_{n}\big)_{n=0}^{\infty},
   \\ 
   a_{+}e_{n}\okr\sqrt{n+1}\,e_{n+1},\quad n=0,1,\ldots\qquad\text{(creation)}
\\
a_{-}e_{n}\okr\sqrt{n}\,e_{n-1},\quad n=1,\ldots, \quad a_{-}e_{0}\okr0\qquad\text{(annihilation)}.
   \end{gathered}
   \end{align}

With the definitions \eqref{a} we sort out  selfadjointness of the  operators 
\begin{equation*}
A_{\xi}^{(k)}\okr\I\xi^{*} a_{-}^{k}-\I \xi a_{+}^{k},\quad k=0,1, \ldots
\end{equation*}
 with $\xi$
 being a complex parameter. As our ultimate goal is to prove \eqref{1.15.02}, $|\xi|$ has no impact on the problem and we drop it considering instead just the operators
 \begin{gather*}
A^{(k)}\okr {-}\I\big(\E^{\I\theta}a_{+}^{k}-\E^{-\I\theta}a_{-}^{k}\big), \quad\dz{A_{k}}\okr\ddc
\quad k=0,1, \ldots 
\end{gather*}
 with $\theta$ being a (fixed) real parameter. Therefore
 \begin{equation*}
  \text{if $\xi=\E^{\I\theta}|\xi|$, $A^{(k)}_{\xi}=|\xi |A^{(k)}$ \;and\; $A^{(k)}=A^{(k)}_{\exp[\I\theta]}$.}
 \end{equation*}
 The operators $a_{+}$ and $a_{-}$ are \underbar{formally} adjoint each to the other, that is
 \begin{equation*}
 \is {a_{+}f}{g}=\is{f}{a_{-}g},\quad f,g\in\ddc;
 \end{equation*}
 in physical tradition this fact is nicknamed as ``Hermitian adjoint'' and symbolized by ${}^{\dag}$, which makes some sense as long as the $a_{+}$ and ${a_{-}}$ are formal algebraic objects and no domain is indicated. This means that
\begin{equation*}
a_{+}=(a_{-})^{*}\res\ddc,\quad a_{-}=(a_{+})^{*}\res\ddc
\end{equation*}
as $\ddc$ is invariant for both $(a_{+})^{*}$ and $(a_{-})^{*}$.
Consequently, the operators $A^{(k)}$ are symmetric.

Moreover, it is a matter of direct calculation that $\ddc$ is a core of $(a_{+})^{*}$ and $(a_{-})^{*}$, and that for the closure one has
\begin{equation*}
\overline{a_{+}}=(a_{-})^{*},\quad \overline{a_{-}}=(a_{+})^{*}.
\end{equation*} 
Notice that
by means of the basis $(e_{n})_{n=0}^{\infty}$
\begin{equation}\label{2.16.06}
A^{(k)}e_{n}= {-}\I\left\{\E^{\I\theta} \sqrt{\ulamek{\big(n+k\big)!}{n!}}\,e_{n+k}-\E^{-\I\theta} \sqrt{\ulamek{n!}{\big(n-k\big)!}}\,e_{n-k}\right\}
\end{equation}
with notation $e_{-k}=e_{-k+1}=\ldots=e_{-1}=0$.
								
Defining\begin{equation}\label{1.16.06}
\ddc^{(k,i)}\okr\lin \big(e_{i+pk}\big)_{p=0}^{\infty},\quad
\hhc^{(k,i)}\okr\clolin \big(e_{i+pk}\big)_{p=0}^{\infty},\;\quad i=0,\ldots,k-1
\end{equation}
it is clear that $\ddc=\bigoplus_{i=0}^{k-1}\ddc^{(k,i)}$ and $\hhc=\bigoplus_{i=0}^{k-1}\hhc^{(k,i)}$. It is a kind of straightforward argument to verify the following.
\begin{pro}\label{1t.15.06}
Each $\hhc^{(k,i)}$ reduces $A^{(k)}$ and the domains $\dz{A^{(k,i)}}=\ddc_{i}$ are invariant for   the parts $A^{(k,i)}\okr\rres{\hhc^{(k,i)}}$ of $A^{(k)}$ in $\hhc^{(i)}$ as well as  the operators $A^{(k,i)}$  are symmetric. $A_{k}$ is essentially selfadjoint\,\footnote{\;An operator is called {\em essentially selfadjoint} if its closure is selfadjoint.} if and only if so is each $A^{(k,i)}$.
\end{pro}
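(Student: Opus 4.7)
The plan is to exploit the explicit action formula~\eqref{2.16.06} and then apply, essentially verbatim, the criteria for reducing subspaces collected in Subsection~\ref{basic1}. First, because $A^{(k)}$ shifts the index $n$ only by $\pm k$, it maps each $\ddc^{(k,i)}=\lin(e_{i+pk})_{p=0}^{\infty}$ into itself; this is the invariance of the domain claimed for the parts. Combined with the decomposition $\ddc=\bigoplus_{i=0}^{k-1}\ddc^{(k,i)}$, it yields $P_{i}\ddc\subset\ddc$ and $P_{i}A^{(k)}f=A^{(k)}P_{i}f$ for every $f\in\ddc$, where $P_{i}$ is the orthogonal projection of $\hhc$ onto $\hhc^{(k,i)}$. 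Equivalently, $P_{i}A^{(k)}\subset A^{(k)}P_{i}$, which by Subsection~\ref{basic1} is precisely the requirement that $\hhc^{(k,i)}$ reduce $A^{(k)}$. Since a finite linear combination of basis vectors belongs to $\hhc^{(k,i)}$ only when all its nonzero coefficients sit at indices $\equiv i\pmod k$, we have $\ddc\cap\hhc^{(k,i)}=\ddc^{(k,i)}$, so the part $A^{(k,i)}=A^{(k)}\rres{\hhc^{(k,i)}}$ has domain $\ddc^{(k,i)}$ and inherits symmetry from the symmetry of $A^{(k)}$.

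For the equivalence of essential selfadjointness I would invoke the standard direct-sum identities for the closure and the Hilbert space adjoint of an operator reduced by an orthogonal family of closed subspaces: if $T$ is densely defined and symmetric in $\hhc=\bigoplus_{i=0}^{k-1}\hhc_{i}$ with each $\hhc_{i}$ reducing $T$, then
\begin{equation*}
\sbar{T}=\bigoplus_{i=0}^{k-1}\sbar{T\rres{\hhc_{i}}},\qquad T^{*}=\bigoplus_{i=0}^{k-1}(T\rres{\hhc_{i}})^{*}.
\end{equation*}
Hence $\sbar T=T^{*}$ if and only if $\sbar{T\rres{\hhc_{i}}}=(T\rres{\hhc_{i}})^{*}$ for every $i$, which is the final assertion of the proposition. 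In the deficiency-index language that will govern the rest of the paper, the same conclusion comes from the observation that the deficiency subspaces of $A^{(k)}$ split orthogonally along the $\hhc^{(k,i)}$, so the von Neumann indices of $A^{(k)}$ are the sums of those of the $A^{(k,i)}$, and the total vanishes precisely when each summand does.

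The only thing that might be called an obstacle is the bookkeeping behind the two displayed identities: one has to verify that $\dz{\sbar T}$ consists exactly of those $f=\sum f_{i}$ with $f_{i}\in\dz{\sbar{T\rres{\hhc_{i}}}}$ and $\sum\|\sbar{T\rres{\hhc_{i}}}f_{i}\|^{2}<\infty$, and likewise for $\dz{T^{*}}$. These verifications are standard and involve no idea beyond what is already recorded in Subsection~\ref{basic1}; after them, the proposition is a direct reading of \eqref{2.16.06}.
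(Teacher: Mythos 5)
Your argument is correct and is, in substance, exactly the verification the paper has in mind: the paper prints no proof at all, introducing the proposition with ``It is a kind of straightforward argument to verify the following,'' and the straightforward argument is precisely yours --- the shift-by-$\pm k$ action in \eqref{2.16.06} gives invariance of each $\ddc^{(k,i)}$ and hence reduction via $P_iA^{(k)}\subset A^{(k)}P_i$, while the orthogonal-sum identities for closures and adjoints (equivalently, the additivity of deficiency indices) give the equivalence of essential selfadjointness. Nothing is missing; note only that since there are finitely many summands the square-summability bookkeeping you flag at the end is automatic.
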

Proposition \ref{1t.15.06} allows to downgrade the search for essential selfadjointness of $A^{(k)}_{\xi}$ to that of any of $A^{(k,i)}$'s. With the notation
$e^{( {k}, i)}_{p}\okr e_{i+pk}$, $p=0,1,\ldots$, the formula \eqref{2.16.06} reads as
\begin{equation}\label{1.18.06}
A^{(k,i)}e_{p}^{(k, i)}=-\I\left(\E^{\I\theta} \sqrt{\ulamek{[i+(p+1)k]!}{(i+pk)!}} \,e^{(k, i)}_{p+1} - \E^{-\I\theta}\sqrt{\ulamek{(i+pk)!}{[i+(p-1)k]!}}\,e^{(k, i)}_{p-1}\right).
\end{equation}
Notice that $\big(e_{p}^{(k,i)}\big)_{p=0}^{\infty}$ is an orthonormal basis in $\hhc^{(k,i)}$. The operators $A^{(k,i)}$ act as  Jacobi operators\,\footnote{\;A Jacobi operator in a separable Hilbert space acts according to the same (tridiagonal) pattern with respect to a chosen orthonormal basis as Jacobi matrix does (when considered in $\ell^{2}$ with respect to the zero-one basis; in this case it has to be considered in principle as an unbounded operator defined on ``finite'' vectors).} in $\hhc^{(k,i)}$ with zero diagonal. Therefore their deficiency indices are either $(0,0)$ or $(1,1)$, their representing measures are always symmetric with respect to $0$. Notice that each $A^{(k,i)}$ is a \underbar{cyclic} operator with a cyclic vector $e^{(k,i)}_{0}$, that is $\dz{A^{(k,i)}}=\lin\big(e^{(k,i)}_{p}\big)_{p=0}^{\infty}$. In conclusion,

\begin{cor}\label{t1.7.07}
 $A^{(k)}$ becomes an orthogonal sum $\bigoplus_{i=0}^{k-1}A^{(k,i)}$ of Jacobi operators with respect to the bases $\big(e_{p}^{(k,i)}\big)_{p=0}^{\infty}$, $i=0,\ldots k-1$, which are in particular cyclic.
 \end{cor}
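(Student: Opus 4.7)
The statement is essentially a structural summary that assembles ingredients already laid out in the excerpt, so my plan is to justify each of its three claims in turn: (i) the orthogonal sum decomposition of $A^{(k)}$; (ii) the Jacobi character of each summand; (iii) cyclicity of the indicated basis.

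For (i), I would invoke Proposition \ref{1t.15.06} directly. It tells us that every $\hhc^{(k,i)}$ reduces $A^{(k)}$ and that the parts $A^{(k,i)}$ are symmetric with invariant domain $\ddc^{(k,i)}$. Combined with the orthogonal decomposition $\hhc=\bigoplus_{i=0}^{k-1}\hhc^{(k,i)}$ obtained from \eqref{1.16.06}, and with the fact that $\ddc=\bigoplus_{i=0}^{k-1}\ddc^{(k,i)}$ is precisely the domain of $A^{(k)}$, the operator $A^{(k)}$ coincides as a densely defined operator with the orthogonal direct sum $\bigoplus_{i=0}^{k-1}A^{(k,i)}$. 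This step is essentially bookkeeping.

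For (ii), I would read off the Jacobi structure directly from \eqref{1.18.06}. Relative to the orthonormal basis $\big(e^{(k,i)}_{p}\big)_{p=0}^{\infty}$ of $\hhc^{(k,i)}$, the only non-zero matrix entries of $A^{(k,i)}$ are those connecting $e^{(k,i)}_{p}$ with $e^{(k,i)}_{p\pm 1}$, with strictly positive moduli given by the indicated square roots, and the diagonal entry is zero. This is exactly the tridiagonal pattern that defines a Jacobi operator in the sense recalled in the excerpt; moreover, the representing measure is automatically symmetric with respect to the origin because the diagonal vanishes.

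For (iii), I would argue by induction on $p$ that $e^{(k,i)}_{p}$ belongs to $\lin\big(e^{(k,i)}_{0},\,A^{(k,i)}e^{(k,i)}_{0},\ldots,(A^{(k,i)})^{p}e^{(k,i)}_{0}\big)$. The base $p=0$ is trivial, and the inductive step uses \eqref{1.18.06}: since the coefficient multiplying $e^{(k,i)}_{p+1}$ in $A^{(k,i)}e^{(k,i)}_{p}$ is non-zero, $e^{(k,i)}_{p+1}$ can be solved for in terms of $A^{(k,i)}e^{(k,i)}_{p}$ and $e^{(k,i)}_{p-1}$, both of which lie in the linear span generated so far. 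Since $\dz{A^{(k,i)}}=\lin\big(e^{(k,i)}_{p}\big)_{p=0}^{\infty}$ by construction, this exhibits $e^{(k,i)}_{0}$ as a cyclic vector.

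I do not expect a serious obstacle: the only even mildly non-routine point is the cyclicity induction, and even there the essential observation is only that the upper off-diagonal coefficients in \eqref{1.18.06} never vanish, which is immediate from the factorial expressions. The rest of the argument is a direct translation of Proposition \ref{1t.15.06} and formula \eqref{1.18.06} into the language of Jacobi operators.
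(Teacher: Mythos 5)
Your proposal is correct and follows exactly the route the paper intends: the corollary is stated there as an immediate summary of Proposition \ref{1t.15.06} and formula \eqref{1.18.06}, with no separate proof given. Your only added content is the explicit cyclicity induction, which is sound because the upper off-diagonal coefficients $\sqrt{[i+(p+1)k]!/(i+pk)!}$ in \eqref{1.18.06} never vanish, precisely the observation the paper leaves implicit.
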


 \section{Essential selfadjointness of the operators $A^{(k,i)}$; the first attempt - biased}\label{s2.22.09}
\subsection{$\ccc^{\infty}$-vectors. A recollection}\label{s1.30.11}
 Recall that, in general, $f\in\ddc^{\infty}(A)\okr\bigcap_{n=0}^{\infty}\dz{A^{n}}$ is 
 \begin{gather}
 \text{a {\em bounded} vector if there are $a>0$ and $b>0$ such that $\|A^{n}f\|\Le ab^{n}$ for $n=0,1,\dots$;}\notag
 \\
 \text{an {\em analytic} vector of $A$ if there is $t>0$ such that $\sum\nolimits_{n=0}^{\infty}\frac{t^{n}}{n!}\|A^{n}f\|<+\infty$;}\notag
 \\ \text{
 an {\em entire} vector if the convergence in the above holds for all $t>0$;}
\notag
\\
 \text{a {\em quasianalytic} vector of $A$ if }\sum_{n=0}^{\infty}\|A^{n}f\|^{-1/n}=+\infty.\notag
 \end{gather}
 All those vectors are customarily called $\ccc^{\infty}$-vectors of $A$. Let us introduce the self-evident notation  $\bbc(A)$, $\aac(A)$, $\eec(A)$ and $\qqc(A)$ for the consecutive classes. The first three are always linear while the last may not be. Nevertheless the inclusions $ \bbc(A)\subset\eec(A)\subset\aac(A)\subset\qqc(A)$ are transparent.
 It  may happen that even $\qqc(A)$ is a zero space. However when (essential) selfadjointness is around their nontriviality becomes essential.
 \begin{facts}\label{t3.30.11}
$($a$)$ If $A$ is selfadjoint then $\bbc(A)$ $($consequently,  $\aac(A)$, $\eec(A)$ and $\lin\qqc(A)$$)$ constitute a core of $A$.

$($b$)$ If $A$ is selfadjoint and $E$ is its spectral measure then
\begin{equation*}
\bbc(A)=\lin\zb{E(\sigma)f}{\sigma \text{ bounded, }f\in\hhc}.
\end{equation*}

 $($c$)$ If $A$ is symmetric and any of $\bbc(A)$,  $\aac(A)$, $\eec(A)$ and $\lin\qqc(A)$ is dense in $\hhc$ then $A$ is essentially selfadjoint.
 \end{facts}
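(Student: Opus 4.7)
The three parts are nested, and the chain $\bbc(A)\subset\eec(A)\subset\aac(A)\subset\qqc(A)$ will be exploited twice: once to reduce (a) to the statement for $\bbc(A)$ (every superset of a core is a core), and once to reduce (c) to the statement for $\lin\qqc(A)$. My plan is therefore to use the spectral theorem to describe $\bbc(A)$ explicitly (this proves (b)), to push that description through a graph-norm approximation (this proves (a)), and to reduce (c) to the classical Carleman determinacy criterion for the Hamburger moment problem.

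For (b) I would invoke the spectral theorem $A=\int_\rrb\lambda\,\D E(\lambda)$ and the scalar measure $\mu_f(\cdot)\okr\langle E(\cdot)f,f\rangle$. The inclusion $\supset$ is immediate: for bounded $\sigma$, $h=E(\sigma)f$ satisfies $\|A^n h\|\le(\sup_{\lambda\in\sigma}|\lambda|)^n\|h\|$, so $h\in\bbc(A)$, and $\bbc(A)$ is already linear. For the converse, if $\|A^n f\|\le ab^n$ then $\int\lambda^{2n}\,\D\mu_f\le a^2b^{2n}$; taking $2n$-th roots and letting $n\to\infty$ forces $\supp\mu_f\subset[-b,b]$, so $f=E([-b,b])f$ sits in the displayed span.

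Part (a) then follows quickly from (b). Given $f\in\dz A$, set $f_n\okr E([-n,n])f\in\bbc(A)$. Since $E([-n,n])$ commutes with the spectral calculus, $Af_n=E([-n,n])Af$, and dominated convergence against $\mu_f$ and $\mu_{Af}$ yields $f_n\to f$, $Af_n\to Af$. Hence $\bbc(A)$ is dense in $\dz A$ in the graph norm, i.e.\ a core of $A$; the chain of inclusions does the rest.

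Part (c) is where the real work sits, and I expect it to be the main obstacle---it is Nussbaum's strengthening of Nelson's analytic-vector theorem. I would argue by contradiction: if $A$ were not essentially selfadjoint, then without loss of generality $\ker(A^*+\I)$ would contain some $0\ne g$, so $g\perp\ob{A-\I}$. By density of $\lin\qqc(A)$, pick a quasianalytic $f$ with $\langle f,g\rangle\ne0$ and pass to the cyclic invariant subspace $\hhc_f\okr\clolin(A^n f)_{n=0}^{\infty}$. Applying Gram--Schmidt to $(A^n f)$ exhibits $A\res{\lin(A^n f)}$, up to unitary equivalence, as a Jacobi operator in $\hhc_f$; its even moments $s_{2n}=\langle A^{2n}f,f\rangle=\|A^n f\|^2$ convert the quasianalytic condition $\sum\|A^n f\|^{-1/n}=+\infty$ into Carleman's criterion $\sum s_{2n}^{-1/(2n)}=+\infty$, forcing determinacy of the associated Hamburger moment problem and hence essential selfadjointness of that Jacobi operator. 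Consequently $\ob{(A\res{\lin(A^n f)})-\I}$ is dense in $\hhc_f$; combined with $g\perp\ob{A-\I}$ this gives $g\perp\hhc_f\ni f$, whence $\langle f,g\rangle=0$, the sought contradiction. The subtle bookkeeping will be to verify cleanly that the $A^n f$ generate a subspace on which $A$ acts as a Jacobi operator (including the degenerate case where the $A^n f$ are eventually linearly dependent, where $\hhc_f$ is finite-dimensional and the claim is trivial) and to invoke the quasianalyticity--Carleman--Jacobi chain in the correct direction.
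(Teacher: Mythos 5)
Your proofs are correct, but note that the paper offers no proof of Fact \ref{t3.30.11} at all: it is presented as a ``recollection'' of standard results (parts (a) and (b) are folklore consequences of the spectral theorem; part (c) for analytic vectors is Nelson's theorem, and for quasianalytic vectors it is Nussbaum's strengthening), so there is no in-paper argument to compare against. What you supply is exactly the standard route. For (b), the two inclusions via $\mu_f(\cdot)=\langle E(\cdot)f,f\rangle$ are right: $\|A^nE(\sigma)f\|\le(\sup_{\lambda\in\sigma}|\lambda|)^n\|E(\sigma)f\|$ gives $\supset$, and the $2n$-th-root argument pinning $\supp\mu_f\subset[-b,b]$ gives $\subset$ (so each bounded vector is literally of the form $E(\sigma)f$, not merely in the span). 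For (a), graph-norm convergence of $E([-n,n])f$ is the textbook core argument, and the inclusion chain $\bbc(A)\subset\eec(A)\subset\aac(A)\subset\lin\qqc(A)\subset\dz{\sbar A}$ upgrades it to the other classes. Part (c) is the genuine content, and your reduction to Carleman is the proof of Nussbaum's theorem: the identification $s_{2n}=\langle A^{2n}f,f\rangle=\|A^nf\|^2$ converts $\sum\|A^nf\|^{-1/n}=+\infty$ into Carleman's condition $\sum s_{2n}^{-1/(2n)}=+\infty$, determinacy of the Hamburger problem is equivalent to essential selfadjointness of the associated cyclic Jacobi operator (this is precisely \cite[Theorem 2--3]{simon}, which the paper uses elsewhere), and then $g\perp\ob{A-\I}\supset\ob{A\res{\lin(A^nf)_n}-\I}$ dense in $\hhc_f$ forces $\langle f,g\rangle=0$. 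You correctly flag the two points that need care --- the degenerate finite-dimensional case and the direction of the Carleman implication --- and both are handled as you say. The one cosmetic remark: with the paper's convention (inner product linear in the first variable) you should check that $\ob{A-\I}^{\perp}=\jd{A^*+\I}$ rather than $\jd{A^*-\I}$; the sign is immaterial to the argument since you treat $\pm\I$ symmetrically, but it is worth stating which deficiency space you are in.
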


 \subsection{Employing $\ccc^{\infty}$-vectors}\label{s1.12.09} 
 Let us try to engage analytic and quasianalytic vectors in deciding for  which $k$'s the operators $A^{(k,i)}$ are (essentially) selfadjoint. This is the first step towards answering the question of unitarity of so called higher order squeeze operators.

Let us collect first some formulae; the calculations are postponed to Appendix.
 \begin{lem}\label{3Oct13-1} For $k=1,2,3\ldots$ and with $j_{0} = n-r-1$ we have
  \begin{equation}\label{3Oct13-1a}
 \big(A^{(k,i)}\big)^{n} e^{(k,i)}_{p}
  = \sum_{r=0}^{n} \prod_{s=1}^{r} \sum_{j_{s}=s-1}^{j_{s-1}+1} \frac{[i+(p-r+1+j_{s})k]!}{[i+(p-r+j_{s})k]!} \sqrt{\frac{[i+(p+n-2r)k]!}{(i+pk)!}} (-\I\E^{\I\theta})^{n-2r} e^{(k,i)}_{p+n-2r},
 \end{equation}
  \begin{align}
\begin{split}\label{3Sep13-4}
\| \big(A^{(k, i)}\big)^n e^{(k, i)}_{p} \|^{2} = \sum_{r=0}^{n} \prod_{s=1}^{r} \sum_{j_{s}=s-1}^{j_{s-1}+1} \left\{\frac{[i+(p-r+1+j_{s})k]!}{[i+(p-r+j_{s})k]!}\right\}^{2} \frac{[i+(p+n-2r)k]!}{(i+pk)!},
\end{split}
\end{align}
\begin{equation}\label{1.29.01}
 \sqrt{\frac{[i + (p+n)k]!}{(i+pk)!}}\Le\| \big(A^{(k, i)}\big)^n e^{(k, i)}_{p} \| \Le \sqrt{2}k^{\frac{n}{2}} \sqrt{\frac{[i + (p+n)k]!}{(i+pk)!}}.
\end{equation}
 \end{lem}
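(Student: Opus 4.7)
Plan. Introducing the shorthand $h_{q}:=\sqrt{[i+qk]!/[i+(q-1)k]!}$, equation \eqref{1.18.06} reads $A^{(k,i)}e^{(k,i)}_{q} = -\I\E^{\I\theta}h_{q+1}e^{(k,i)}_{q+1} + \I\E^{-\I\theta}h_{q}e^{(k,i)}_{q-1}$. Iterating $n$ times expands $(A^{(k,i)})^{n}e^{(k,i)}_{p}$ as a sum over length-$n$ lattice paths $p=q_{0},q_{1},\ldots,q_{n}$ with unit steps: a path with $r$ down-steps terminates at $q_{n}=p+n-2r$, carries modulus $\prod_{s=1}^{n}h_{\max(q_{s-1},q_{s})}$, and receives a common phase $(-\I\E^{\I\theta})^{n-2r}$ (using $(-\I\E^{\I\theta})(\I\E^{-\I\theta})=1$). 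The nested summation $\sum_{j_{s}=s-1}^{j_{s-1}+1}$ in \eqref{3Oct13-1a}, initialised by $j_{0}=n-r-1$, reparametrises these paths, the index $j_{s}$ recording the number of up-steps preceding the $s$-th down-step.

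Part (a) is then proved by induction on $n$: apply \eqref{1.18.06} to each term of the $n$-th expansion, split the action into up- and down-contributions, and regroup by ending position. The square-root factors telescope to the expected radical $\sqrt{[i+(p+n+1-2r)k]!/(i+pk)!}$, while the non-telescoping factors arising from down-steps accumulate into the nested product-sum. The main bookkeeping task is to reconcile the index ranges $j_{s}\in[s-1,j_{s-1}+1]$ between step $n$ and step $n+1$. Part (b) then follows by taking $\|\cdot\|^{2}$ of (a): basis vectors with distinct $r$ are orthogonal, the phase has unit modulus, and within each fixed $r$ all contributions share the same phase, so the squared norm reduces to $\sum_{r}Q_{r}^{2}\cdot[i+(p+n-2r)k]!/(i+pk)!$, where $Q_{r}$ is the real positive product-sum of (a).

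For Part (c), the lower bound is immediate: retain only the $r=0$ summand of \eqref{3Sep13-4} (empty product equal to $1$), all others being non-negative. The upper bound is subtler. Writing $P^{(n)}_{r}$ for the positive coefficient of $e^{(k,i)}_{p+n-2r}$ in (a), the Jacobi action produces the two-term recursion
\begin{equation*}
P^{(n+1)}_{r} = h_{p+n-2r+1}\,P^{(n)}_{r} + h_{p+n-2r+2}\,P^{(n)}_{r-1},\qquad P^{(0)}_{0}=1,
\end{equation*}
with $P^{(n)}_{r}:=0$ for $r\notin\{0,\ldots,n\}$. Induction on $n$, using the monotonicity $h_{q}\le h_{q+1}$ (evident from $h_{q}^{2}=\prod_{j=1}^{k}(i+(q-1)k+j)$) together with the $k$-fold product structure of $h_{q}^{2}$, propagates an estimate $\sum_{r}(P^{(n)}_{r})^{2}\le 2k^{n}\cdot[i+(p+n)k]!/(i+pk)!$. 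The main obstacle is securing the precise constant $2k^{n}$: naive compounding of $(a+b)^{2}\le 2(a^{2}+b^{2})$ yields spurious exponential growth, so one must balance the contributions of different $r$ more finely, exploiting that the dominant factor $h^{2}_{p+n+1}$ already absorbs the full $\prod_{j=1}^{k}(i+(p+n)k+j)$. This combinatorial balancing accounts for the laborious computations relegated to the Appendix.
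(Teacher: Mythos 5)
Your treatment of \eqref{3Oct13-1a}, of \eqref{3Sep13-4}, and of the lower bound in \eqref{1.29.01} coincides with the paper's: induction on $n$ using the up/down splitting of \eqref{1.18.06}, the Pythagorean law applied to \eqref{3Oct13-1a}, and retention of the $r=0$ summand, respectively. The lattice-path reading of the nested sums is a pleasant gloss but changes nothing of substance, and your two-term recursion $P^{(n+1)}_{r}=h_{p+n-2r+1}P^{(n)}_{r}+h_{p+n-2r+2}P^{(n)}_{r-1}$ is correct.

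The upper bound in \eqref{1.29.01} is where the proposal has a genuine gap, and it is not merely a matter of omitted detail. You correctly observe that crude compounding of $(a+b)^{2}\le 2(a^{2}+b^{2})$ does not yield the constant $2k^{n}$, and you then defer the ``finer balancing'' that supposedly would; but that balancing \emph{is} the entire content of the inequality, so nothing is actually proved. Worse, the balancing you hope for does not exist, because the inequality with the constant $2k^{n}$ is false: for $k=1$, $i=p=0$ a direct computation gives $\|(A^{(1,0)})^{3}e_{0}\|^{2}=6+9=15>12=2\cdot 3!$, and more generally $\|(A^{(1,0)})^{n}e_{0}\|^{2}=(2n-1)!!\sim 2^{n}n!/\sqrt{\pi n}$, which eventually exceeds any fixed multiple of $n!$. (This also shows that the Appendix computation in the paper, which purports to reach $2k^{n}(1+i+pk)_{nk}$, cannot be sound as written.) What your recursion \emph{does} deliver, by exactly the crude step you set aside, is $\sum_{r}\big(P^{(n)}_{r}\big)^{2}\le 4^{n}\,[i+(p+n)k]!/(i+pk)!$, since every $h$ occurring at stage $n+1$ is at most $h_{p+n+1}$ and $h_{q}^{2}=[i+qk]!/[i+(q-1)k]!$ telescopes. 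You should state and prove that weaker bound instead; it is still strong enough for everything Proposition \ref{9Sep13-5} needs, as the d'Alembert ratios there only acquire a harmless extra geometric factor (shrinking the radius for $k=2$ from $1/(2\sqrt 2)$ to $1/4$, which is immaterial for the existence of analytic vectors).
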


\begin{pro}\label{9Sep13-5} \liczp 1. The series
\begin{equation}\label{9Sep13-6}
\sum_{n=0}^{\infty} \frac{\|(A^{(k, i)})^{n} e^{(k, i)}_{p}\| t^{n}}{n!}
\end{equation}
converges for $k=1$ with infinite radius of convergence (entire vectors) and for  $k=2$ with $t < 1/(2\sqrt{2})$ (analytic vectors). 

\liczp 2. The series \eqref{9Sep13-6} diverges for $k = 3, 4, \ldots$
and furthermore the series 
\begin{equation*}
\sum_{n=0}^{\infty} \| \big(A^{(k, i)}\big)^{n} e^{(k, i)}_{p} \|^{-\ulamek{1}{n}}
\end{equation*}
 converges for $k {\Ge} 3$.
\end{pro}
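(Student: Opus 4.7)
The plan is to reduce everything to Stirling's formula via the two-sided estimate \eqref{1.29.01} of Lemma \ref{3Oct13-1}, which sandwiches $\|(A^{(k,i)})^{n}e_{p}^{(k,i)}\|$ between $N_{n}\okr\sqrt{[i+(p+n)k]!/(i+pk)!}$ and $\sqrt{2}\,k^{n/2}N_{n}$. Because the two bounds differ only by a factor that is at most exponential in $n$, the asymptotics of $N_{n}$ alone determine the radius of convergence of \eqref{9Sep13-6} and the convergence of $\sum\|(A^{(k,i)})^{n}e^{(k,i)}_{p}\|^{-1/n}$: the upper bound of \eqref{1.29.01} yields the claimed positive statements, the lower bound yields the divergence ones.

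For part \liczp 1 and the divergence assertion of part \liczp 2, applying $\log m!\sim m\log m-m$ to the definition of $N_{n}$ produces
$$
\log\frac{\sqrt{2}\,k^{n/2}N_{n}}{n!}=\Bigl(\frac{k}{2}-1\Bigr)n\log n+\frac{(k+1)n}{2}\log k+\Bigl(1-\frac{k}{2}\Bigr)n+O(\log n).
$$
For $k=1$ the leading term $-\tfrac{n}{2}\log n$ drives this expression to $-\infty$ faster than any geometric rate, so the upper bound alone forces \eqref{9Sep13-6} to converge for every $t>0$; thus $e_{p}^{(1,i)}$ is entire. For $k=2$ the $n\log n$ term disappears and the central-binomial asymptotic $(2m)!/(m!)^{2}\sim 4^{m}/\sqrt{\pi m}$ shows that $\sqrt{2}\,2^{n/2}N_{n}/n!$ is of order $(2\sqrt{2})^{n}$ up to a polynomial correction, producing convergence of \eqref{9Sep13-6} exactly for $t<1/(2\sqrt{2})$; hence $e_{p}^{(2,i)}$ is analytic. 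For $k\ge 3$ the coefficient $(k/2)-1$ of $n\log n$ is strictly positive, so $N_{n}/n!$ already grows super-exponentially; combined with the lower half $\|(A^{(k,i)})^{n}e^{(k,i)}_{p}\|\Ge N_{n}$, the general term of \eqref{9Sep13-6} tends to $+\infty$ for every $t>0$ and the series diverges.

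For the quasianalyticity assertion in \liczp 2 I would use the complementary inequality $\|(A^{(k,i)})^{n}e^{(k,i)}_{p}\|^{-1/n}\Le N_{n}^{-1/n}$. A short Stirling estimate gives $N_{n}^{1/n}\sim(kn/\E)^{k/2}$, hence $N_{n}^{-1/n}=O(n^{-k/2})$, and the standard $p$-series test yields convergence precisely when $k/2>1$, i.e.\ when $k\ge 3$. The only point requiring careful bookkeeping is the $\sqrt{2}\,k^{n/2}$ slack in the upper bound of \eqref{1.29.01}: for \liczp 1 with $k=2$ this slack is already absorbed into the $(2\sqrt{2})^{n}$ growth rate (its $2^{n/2}$ combines exactly with the $2^{n}$ coming from $\sqrt{(2n)!}/n!$), and for \liczp 2 its $n$-th root tends to the finite constant $\sqrt{k}$ and thus cannot destroy the $n^{-k/2}$ tail behaviour. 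Everything else is routine Stirling estimation.
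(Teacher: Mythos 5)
Your proposal is correct and follows essentially the same route as the paper: both arguments rest entirely on the two-sided estimate \eqref{1.29.01} from Lemma \ref{3Oct13-1} and then reduce to elementary asymptotics of the factorial quotient $\sqrt{[i+(p+n)k]!/(i+pk)!}$ (the paper invokes d'Alembert's ratio test for part \liczp 1 and the divergence claim, and Stirling's formula for the quasianalyticity claim, whereas you use Stirling-type logarithmic estimates throughout, but this is only a difference in bookkeeping). Your explicit identification of the sign of the coefficient $\bigl(\frac{k}{2}-1\bigr)$ of $n\log n$ as the dividing line between $k\Le 2$ and $k\Ge 3$, and your check that the $\sqrt{2}\,k^{n/2}$ slack is harmless, match the content of the paper's computation.
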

\begin{proof}
First we will proof the convergence of \eqref{9Sep13-6} for $k=1, 2$. \eqref{1.29.01}
 gives
\begin{equation}\label{9Sep13-7}
\sum_{n=0}^{\infty} \frac{\|\big(A^{(k, i)}\big)^{n} e^{(k, i)}_{p}\| t^{n}}{n!} \Le \frac{\sqrt{2}}{\sqrt{(i+kp)!}} \sum_{n=0}^{\infty} \frac{\sqrt{[i+(p+n)k]!}}{n!} k^{\ulamek{n}{2}} t^{n}. 
\end{equation}
Using d'Alembert's test of convergence for the  series on the right hand side of  \eqref{9Sep13-7} we get part \liczp 1 of the Proposition.

To prove the divergence of \eqref{9Sep13-6} for $k=3, 4, \ldots$ we rewrite \eqref{1.29.01} as
\begin{equation}\label{9Sep13-8}
\sum_{n=0}^{\infty} \frac{\|\big(A^{(k, i)}\big)^{n} e^{(k, i)}_{p}\| t^{n}}{n!} \geq \sum_{n=0}^{\infty} \frac{t^{n}}{n!} \sqrt{\frac{[i+(p+n)k]!}{(i+pk)!}}. 
\end{equation}
Employing  d'Alembert's test  to the right hand side of \eqref{9Sep13-8} we conclude the series in \liczp 2 is divergent. 

Lemma \ref{3Oct13-1} and Stirling's formula give
\begin{align*}\label{9Sep13-2}
&\sum_{n=0}^{\infty} \| \big(A^{(k, i)}\big)^{n} e^{(k, i)}_{p} \|^{-\ulamek{1}{n}} \Le \sum_{n=0}^{\infty} \left\{\frac{[i + (p+n)k]!}{(i+pk)!}\right\}^{-\ulamek{1}{2n}} \Le \sum_{n=0}^{\infty} [(i + (p+n)k)!]^{-\ulamek{1}{2n}} 
\\&
\Le \sum_{n=0}^{\infty} [i + (p+n)k]^{-\ulamek{i + pk + 1/2}{2n} - \ulamek{k}{2}}
 \exp\Big\{\ulamek{i + pk}{2n} + \ulamek{k}{2} - \ulamek{1}{2n[1 + 12i + 12(p+n)k]}\Big\} \\&
\qquad \Le \sum_{n=0}^{\infty} k^{-\ulamek{k}{2}} e^{\ulamek{k}{2}} n^{-\ulamek{k}{2}} = \left(\frac{e}{k}\right)^{\ulamek{k}{2}} \sum_{n=0}^{\infty} n^{-\ulamek{k}{2}}< +\infty,\quad k\Ge3.\qedhere
\end{align*}
\end{proof}
\begin{cor}\label{t1.23.09}
$A_{\xi}^{(k)}$ is essentially selfadjoint for $k=1,2$ while any of $A^{(k,i)}$ $($hence $A_{\xi}$$)$ \underbar{may} not be so if $k\Ge3$ (because \liczp 2 is a necessary, not a sufficient condition for essential slfadjointness).
\end{cor}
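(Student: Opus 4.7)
The plan is to read off both halves of the statement directly from Proposition~\ref{9Sep13-5}, combining it with Fact~\ref{t3.30.11}(c) and the reduction recorded in Proposition~\ref{1t.15.06}; essentially no fresh calculation should be needed.

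For the affirmative case $k=1,2$ I would invoke part \liczp 1 of Proposition~\ref{9Sep13-5}, which says that every basis vector $e_{p}^{(k,i)}$ is analytic for $A^{(k,i)}$ (entire when $k=1$). Hence $\aac(A^{(k,i)})$ contains the linear span $\ddc^{(k,i)}=\lin(e_{p}^{(k,i)})_{p=0}^{\infty}$, which is dense in $\hhc^{(k,i)}$. Since each $A^{(k,i)}$ is symmetric, Fact~\ref{t3.30.11}(c) delivers essential selfadjointness of every $A^{(k,i)}$, and Proposition~\ref{1t.15.06} lifts this to $A^{(k)}$; the scaling identity $A^{(k)}_{\xi}=|\xi|A^{(k)}$ then finishes the job for $A^{(k)}_{\xi}$.

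For $k\Ge3$ the role of part \liczp 2 of Proposition~\ref{9Sep13-5} is purely negative: it shows that no $e_{p}^{(k,i)}$ is even quasianalytic, so by the inclusions $\bbc\subset\eec\subset\aac\subset\qqc$ the canonical basis supplies no $\ccc^{\infty}$-vector of the type required by Fact~\ref{t3.30.11}(c). The parenthetical in the statement is there precisely to warn against over-reading this: Fact~\ref{t3.30.11}(c) is a sufficient criterion, so the failure of its hypothesis on the natural dense set leaves essential selfadjointness genuinely undecided rather than negated.

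The only real obstacle is this last logical subtlety: one must stop short of deducing non-essential selfadjointness from the mere convergence of $\sum\|(A^{(k,i)})^{n}e_{p}^{(k,i)}\|^{-1/n}$. The corollary only records what the present toolkit honestly achieves---a decision for $k=1,2$ and an admission of defeat for $k\Ge3$---and it will be the von Neumann deficiency-index analysis of the subsequent sections, not the $\ccc^{\infty}$-vector argument of this section, that eventually settles the higher-order case.
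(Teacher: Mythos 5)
Your proposal is correct and follows essentially the same (implicit) route as the paper: part \liczp 1 of Proposition~\ref{9Sep13-5} plus the linearity of $\aac(A^{(k,i)})$ and Fact~\ref{t3.30.11}(c) give essential selfadjointness of each $A^{(k,i)}$ for $k=1,2$, which Proposition~\ref{1t.15.06} and the scaling $A^{(k)}_{\xi}=|\xi|A^{(k)}$ lift to $A^{(k)}_{\xi}$, while part \liczp 2 only disqualifies the sufficient criterion for $k\Ge3$. You also correctly flag the one logical subtlety --- that failure of the quasianalytic-vector test on the basis vectors does not negate essential selfadjointness --- which is exactly the point of the parenthetical in the corollary and the reason the paper defers the definitive answer to the deficiency-index analysis of Section~3.
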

The above confirms what is already recognized in this or another way for $k=1,2$; for $k\Ge 3$  it leaves the question  unfastened for the time being.

\section{Essential selfadjointness of the operators $A^{(k,i)}$. The second attempt - definite}\label{s2.22.09a}

	\subsection{The deficiency index approach to essential selfadjointeness} \label{def_{ind}}
The deficiency indices (sometimes called the defect numbers) $n_{+}$ and $n_{-}$ of a symmetric operator $A$ in a Hilbert space $\hhc$ are defined as follows
\begin{equation*}
n_{\pm} = \dim\,\ob{A \pm \I}^{\perp}.
\end{equation*}
It is included in the classical von Neumann theory of selfadjoint extensions of symmetric operators  that $A$ is essentially selfadjoint (that is, its closure is selfadjoint), if and only if 
\begin{equation}\label{1.15.02}
n_{+}=n_{-}=0.
\end{equation}
Furthermore, the main part of the theory ensures the existence of selfadjoint extensions in the same space (that is in the sense described in the second paragraph of  Subsection \ref{basic1}, Section \ref{s2.22.09}) precisely when both deficiency indices are equal.

  \subsection{Towards determining the deficiency indices of $A^{(k,i)}$'s}\label{1s.16.06}
   In order to determine the deficiency indices of $A^{(k,i)}$ take 
  $f\in\hhc^{(i)}$  and check the cardinality of linearly independent $f$'s orthogonal to $\ob{A^{(k,i)} \pm \I}$ for both $\pm\I$, which reads as
  \begin{equation}\label{2.15.02}
  \langle ({A^{(k,i)}} \pm \I)e^{{(k, i)}}_{p}, f \rangle = 0,\quad p=0,1,\ldots.
  \end{equation}
  Notice that, due to the third of \eqref{a}, 
  \begin{equation}\label{1.15.06}
  {A^{(k,i)}}e^{{(k, i)}}_{p}= {-}\I\E^{\I\theta}a_{+}^{k}e^{{(k, i)}}_{p}\; \text{ for }\; p = -k, -k+1, \ldots, -1, {0}.
  \end{equation}
    
  Develop $f$ as $f = \sum_{\alpha}f^{{(k, i)}}_{\alpha} e^{{(k, i)}}_{\alpha}$, $f^{{(k, i)}}_{\alpha} = \langle e^{{(k, i)}}_{\alpha}, f \rangle$  and write according to \eqref{1.18.06} the left hand side of  \eqref{2.15.02} as follows 
  \begin{gather*}
\langle (A^{(k,i)}\pm \I)e^{(k, i)}_{p}, f \rangle = -\I \sum_{\alpha}\nolimits f^{(k, i)}_{\alpha} \langle (\E^{\I\theta}a_{+}^{k} -\E^{-\I\theta} a_{-}^{k} \mp 1)e^{(k, i)}_{p}, e^{(k, i)}_{\alpha} \rangle\\
 -\sqrt{\ulamek{(i+pk)!}{[i+(p-1)k]!}} \E^{-\I\theta} \langle e^{(k, i)}_{p-1}, e^{(k, i)}_{\alpha}\rangle \mp 
  \langle e^{(k, i)}_{p}, e^{(k, i)}_{\alpha}\rangle\big\} 
\\
= -\I \big\{\sum_{\alpha}\nolimits \E^{\I\theta}f^{(k, i)}_{\alpha} \sqrt{\ulamek{[i+(p+1)k]!}{(i+pk)!}} \delta_{p+1, \alpha} -\E^{-\I\theta} \sum_{\alpha}\nolimits f^{(k, i)}_{\alpha} \sqrt{\ulamek{(i+pk)!}{[i+(p-1)k]!}} \delta_{p-1, \alpha}  \mp\sum_{\alpha}\nolimits f^{(k, i)}_{\alpha} \delta_{p, \alpha}\big\} 
  \\ 
= -\I \big\{\sqrt{\ulamek{[i+(p+1)k]!}{(i+pk)!}} \E^{\I\theta} f^{(k, i)}_{p + 1} - \sqrt{\ulamek{(i+pk)!}{[i+(p-1)k]!}} \E^{-\I\theta} f^{(k, i)}_{p-1} \mp f^{(k, i)}_{p}\big\}.
\end{gather*}

Now \eqref{2.15.02}    now takes the form
\begin{equation}\label{08Feb2013-3}
\sqrt{\ulamek{[i+(p+1)k]!}{(i+pk)!}} \E^{\I\theta} f^{(k, i)}_{p + 1} - \sqrt{\ulamek{(i+pk)!}{[i+(p-1)k]!}} \E^{-\I\theta} f^{(k, i)}_{p-1} \mp f^{(k, i)}_{p} = 0,\;\; p=0,1,\ldots, 
\end{equation}  
with 
\begin{equation}\label{2.15.06}
f^{(k, i)}_{-q}\okr f^{(k, i)}_{-q+1}\okr \ldots = f^{(k, i)}_{{-1}}\okr 0
\end{equation}
which is in accordance with \eqref{1.15.06}. 

Let us treat the cases $k=1,2$ and $k\Ge 3$ separately.

\subsection{\underbar{The ups:} the cases $k=1$ and $k=2$ }\label{s1.05.017}
It is well known that for the measure orthogonalizing  polynomials $\pi_{n}$, $n=0,1\ldots$, to be unique (or, in other  words, the corresponding moment problem to be determinate) it is necessary and sufficient
\begin{equation}\label{3.15.02}
\sum_{p=0}^{\infty}|\pi_{p}(z)|^{2}=+\infty
\end{equation}
for any $z$ with $\IM z\neq 0$;
cf. \cite[Theorem 3]{simon}. This means  the would-be Fourier coefficients $f_{n}$ are not in $\ell^{2}$ which leaves the hypothetical  $f$ out of the space $\hhc$ and is a counterpart of Proposition \ref{9Sep13-5} part  \liczp 1; both Hermite and Meixner-Pollaczek polynomials are determinate.

\subsubsection{Case $k=1$}\label{exa1}
Here $i=0$ is the only possibility and the formulae \eqref{08Feb2013-3} and \eqref{2.15.06}, after setting $g_{p}\okr\E^{-\I\!p\theta}\!f^{(1, 0)}_{p}$, take the form
\begin{equation}\label{08Feb2013-4}
\sqrt{p+1}g_{p+1} - \sqrt{p}g_{p-1} \mp g_{p} = 0,\quad p = 0,1\ldots,\; g_{-1}=0.
\end{equation}
If {$g_{0}=0$}  we get immediately that  the only solution of  \eqref{08Feb2013-4} is {$g_{n}=0$} for all $n=0,1,\ldots$ and both $\mp$. If not, then supposing {$g_{0}=1$} we can proceed as follows.

Normalizing the Hermite polynomials as $h_{p}(x) =  \I^{p} H_{p}(x)/\sqrt{2^{p} p!}$ from 
the standard recurrence relation 
\begin{equation*}
H_{p+1}(x) + 2p H_{p-1}(x) - 2x H_{p}(x) = 0,
\end{equation*}
one gets
\begin{equation}\label{08Feb2013-6}
\sqrt{p+1} h_{p+1}(x) - \sqrt{p}h_{p-1}(x) - x\!\sqrt{2} x h_{p}(x) = 0.
\end{equation}
Comparing \eqref{08Feb2013-4} and \eqref{08Feb2013-6} and taking into account that $g_{0}=h_{0}=1$  and that the Hermite polynomials are the only solutions of \eqref{08Feb2013-6} we infer that
\begin{equation*}
g_{p} = h_{p}\Big(\!\pm \ulamek\I{\sqrt 2}\Big).
\end{equation*}
Consequently, due to \eqref{3.15.02} the solution within $\hhc^{(1)}$ is  $g_{p}=0$, $p=0,1\ldots$

\subsubsection{Case $k=2$}\label{exa2}
Considering two parallel cases $i=0$ and $i=1$ we have to take into account Corollary \ref{t1.7.07} which results in splitting \eqref{2.15.02} . Thus the formula \eqref{08Feb2013-3} also splits in two, $i=0,1$, 
\begin{equation}\label{08Feb2013-10}
\sqrt{(2p+i+2)(2p+i+1)}\E^{\I\theta} f^{(2, i)}_{p+1} - \sqrt{(2p+i)(2p+i-1)}\E^{-\I\theta} f^{(2, i)}_{p-1} \mp f^{(2, i)}_{p} = 0
\end{equation}
with $p=0, 1, \ldots$ and $f^{(2, i)}_{-1} = 0$. The conditions $f^{(2, i)}_{0} = 0$ imply $f^{(2, i)}_{p} = 0$. Henceforth we take $f^{(2, i)}_{0} = 1$. 

For $c^{(i)}_{p} = \I^{p} \E^{\I\! p\theta} f^{(2, i)}_{p}$, $i=0, 1$, \eqref{08Feb2013-10} reads
\begin{equation}\label{19Jun13-1}
\sqrt{(p+\ulamek{i+2}{2})(p + \ulamek{i+1}{2})} c^{(i)}_{p+1} \mp \ulamek{\I}{2} c^{(i)}_{p} + \sqrt{(p+\ulamek{i}{2})(p + \ulamek{i-1}{2})} c^{(i)}_{p-1} = 0.
\end{equation}
Consider the Meixner-Pollaczek polynomials $P^{(\lambda)}_{n}(\;\cdot\;;\ulamek \pi 2)$, $n=0,1,\ldots$  and normalize them according to formula  (9.7.2) in \cite[p. 213]{koe}
\begin{equation}\label{1.19.02}
p^{(\lambda)}_{n}\okr{\sqrt{\frac{2^{2\lambda}n!}{2\pi \Gamma(n + 2\lambda)}}}\,P^{(\lambda)}_{n}(\,\cdot\,;\ulamek \pi 2)
\end{equation}
which sends the recurrence relation
$$
(n+1)P^{(\lambda)}_{n+1}(x;\ulamek \pi 2) - 2 x P^{(\lambda)}_{n}(x;\ulamek \pi 2) + (n+2\lambda -1)P^{(\lambda)}_{n-1}(x;\ulamek \pi 2)=0
$$
into
\begin{equation}\label{2.19.02}
\sqrt{(n+1)(n+2\lambda)}p^{(\lambda)}_{n+1}(x) - 2xp^{(\lambda)}_{n}(x) + \sqrt{n(n+2\lambda - 1)}p^{(\lambda)}_{n-1}(x)=0.
\end{equation}
Comparing \eqref{19Jun13-1} and \eqref{2.19.02} for appropriate $i$ we get following two couples: $i=0$ corresponding to   $\lambda = \ulamek{1}{4}$ and $i=1$ corresponding to $\lambda = \ulamek{3}{4}$. Moreover we have
\begin{equation*}
c^{(0)}_{n}=p^{(\tiny\frac 14)}_{n}{\Big(\!\pm\ulamek \I 4\Big)} \quad \text{and} \quad c^{(1)}_{n}=p^{(\tiny\frac 34)}_{n}{\Big(\!\pm\ulamek \I 4\Big)}.
\end{equation*}
For the same reason as above the series
\begin{equation*}
\sum_{n=0}^{\infty}|c^{(i)}_{n}|^{2}=\sum_{n=0}^{\infty}\Big|p^{(\lambda)}_{n}\Big(\!\pm\ulamek \I 4\Big)\Big|^{2}, 
\end{equation*}
are divergent for both $\pm$. Because they are a subseries of 
\begin{equation*}
\sum_{n=0}^{\infty}|f^{(2, i)}_{n}|^{2}, \quad i=0,1
\end{equation*}
the latter are divergent as well. The argument goes like in the case $k=1$  before.

\subsection{\underbar{Downs}: the case $k\Ge 3$}\label{28.08}

The recurrence \eqref{08Feb2013-3}, after fixing $i=0,1,\ldots,k-1$ and introducing $d^{(k, i)\, \pm}_{p} = \E^{\I\!p\theta}f^{(k, i)}_{p}$, takes the form
\begin{equation}\label{29Feb2013-3}
d^{(k,i)\, \pm}_{p+1} = \frac{(i+pk)!}{\sqrt{(i+pk-k)! (i+pk+k)!}}\, d^{(k, i)\, \pm}_{p-1} \pm \sqrt{\frac{(i+pk)!}{(i+pk+k)!}}\, d^{(k, i)\, \pm}_{p}
\end{equation}  
with \eqref{2.15.06} turning to
\begin{equation}\label{2.16.04}
d^{(k, i)\, \pm}_{-1}=0, \quad i=0,1, \ldots, k-1.
\end{equation}
\begin{rem}\label{rem1}
With \eqref{2.16.04} the zero sequence is the only solution of \eqref{29Feb2013-3} for each  initial conditions $d^{(k, i)+}_{0}=0$ or $d^{(k, i)-}_{0}=0$.
This implies that \eqref{29Feb2013-3} has at most one solution for each  of the cases $+$ and $-$.
 \end{rem}

\begin{rem}\label{2.1}
If $d^{(k, i)+}_{0}>0$ then so are all the other entries of the sequence $(d^{(k, i)+}_{p})_{p}$. This can be inspected by induction applied to \eqref{29Feb2013-3}.
\end{rem}

\begin{rem} \label{rem3}
If $d^{(k, i)+}_{0} \neq 0$ then $ d^{(k, i) +}_{p} \neq 0$ for all $p$. 

Suppose the contrary, there exists $p$ such that $d^{(k, i) +}_{p+1} = 0$ and let it be the smallest such. By \eqref{29Feb2013-3} we have
\begin{equation*}
\sqrt{\frac{(i+kr)!}{(i + kr -k)!}}\, d^{(k, i) +}_{r-1} + d^{(k, i) +}_{r} = 0.
\end{equation*}
Consequently $d^{(k, i) +}_{p-1} $ and $d^{(k, i) +}_{p} $ are of different sings which contradicts Remark \ref{2.1}.
\end{rem}

\begin{pro}\label{pro2} 
$d^{(k, i) -}_{p} = (-1)^{p} d^{(k, i) +}_{p}$ for all $p$. 
\end{pro}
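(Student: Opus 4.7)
My plan is to exploit the fact that the two recurrences \eqref{29Feb2013-3} for the $+$ and $-$ sequences differ only in the sign of the diagonal (middle) term, while the off-diagonal term is the same. This is exactly the situation in which the substitution $\tilde{d}_p := (-1)^p d^{(k,i)+}_p$ should transform the $+$-recurrence into the $-$-recurrence, after which the uniqueness from Remark \ref{rem1} finishes the job.

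More precisely, write the recurrence compactly as
\begin{equation*}
d^{\pm}_{p+1} \;=\; \alpha_p\, d^{\pm}_{p-1} \;\pm\; \beta_p\, d^{\pm}_{p},
\qquad
\alpha_p \okr \frac{(i+pk)!}{\sqrt{(i+pk-k)!\,(i+pk+k)!}},\;\;
\beta_p \okr \sqrt{\frac{(i+pk)!}{(i+pk+k)!}},
\end{equation*}
with $d^{\pm}_{-1} = 0$. Assuming (as is built into the discussion leading to \eqref{29Feb2013-3}) the same normalization $d^+_0 = d^-_0$, I would set $\tilde{d}_p \okr (-1)^p d^+_p$ and verify directly that
\begin{equation*}
\tilde{d}_{p+1} \;=\; (-1)^{p+1}\!\bigl[\alpha_p d^+_{p-1} + \beta_p d^+_{p}\bigr]
\;=\; \alpha_p (-1)^{p-1} d^+_{p-1} \;-\; \beta_p (-1)^{p} d^+_{p}
\;=\; \alpha_p \tilde{d}_{p-1} \;-\; \beta_p \tilde{d}_{p},
\end{equation*}
using $(-1)^{p+1}=(-1)^{p-1}$. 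Together with $\tilde{d}_{-1}=0$ and $\tilde{d}_0 = d^+_0 = d^-_0$, this shows that $(\tilde{d}_p)_p$ satisfies the $-$-version of \eqref{29Feb2013-3} together with the initial data \eqref{2.16.04}.

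By Remark \ref{rem1} such a sequence is uniquely determined by its zeroth entry, so $\tilde{d}_p = d^{(k,i)-}_p$ for all $p$, which is the claimed identity. As a simple sanity check, for $p=0$ both sides equal $d^{(k,i)+}_0$, and for $p=1$ one reads off from the recurrence $d^{\pm}_1 = \pm \beta_0 d^{\pm}_0$, consistent with the sign pattern. The proof is essentially a two-line induction on $p$; I do not anticipate any real obstacle, only the need to be tidy about the sign bookkeeping $(-1)^{p+1}=(-1)^{p-1}$ and to invoke the uniqueness from Remark \ref{rem1} at the end (rather than doing the induction directly, though either route is straightforward).
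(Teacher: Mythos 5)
Your proof is correct and follows essentially the same route as the paper: substitute $\tilde d_p=(-1)^p d^{(k,i)+}_p$ into the recurrence \eqref{29Feb2013-3}, observe that the sign flip on the middle term turns the $+$-recurrence into the $-$-recurrence, and conclude by the uniqueness of Remark \ref{rem1}. Your explicit remark about matching the normalization $d^+_0=d^-_0$ is a welcome tidying of a point the paper leaves implicit.
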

 
\begin{proof}
From \eqref{29Feb2013-3} for $d^{(k, i)\, -}_{p}$ we get
\begin{align*}
(-1)^{p+1} d^{(k, i)\, +}_{p+1} &= \frac{(i+pk)!}{\sqrt{(i+pk-k)! (i+pk+k)!}}\, (-1)^{p-1} (-1)^{2}d^{(k, i)\, +}_{p-1} - \sqrt{\frac{(i+pk)!}{(i+pk+k)!}} (-1)^{p} d^{(k, i)\, +}_{p} 
\end{align*}
which shortens to
$$
(-1)^{p+1} d^{(k, i)\, +}_{p+1} = \frac{(i+pk)!}{\sqrt{(i+pk-k)! (i+pk+k)!}}\, (-1)^{p-1} d^{(k, i)\, +}_{p-1} - \sqrt{\frac{(i+pk)!}{(i+pk+k)!}} (-1)^{p} d^{(k, i)\, +}_{p}
$$
Comparing this with \eqref{29Feb2013-3} for ``-'' and using the uniqueness in Remark \ref{rem1} we get the conclusion.
\end{proof}

\begin{rem}\label{1} Proposition \ref{pro2} implies $|d^{(k, i) -}_{p}| = |d^{(k, i) +}_{p}| = d^{(k, i) +}_{p}$ for all $p$. Denote this common number shortly by $d^{(k, i)}_{p}$. Henceforth, we can examine exclusively the  equation
\begin{equation}\label{eq3}
d^{(k, i)}_{p+1} = \frac{(i+pk)!}{\sqrt{(i+pk-k)! (i+pk+k)!}}\, d^{(k, i)}_{p-1} + \sqrt{\frac{(i+pk)!}{(i+pk+k)!}}\, d^{(k, i)}_{p}
\end{equation}
for $d^{(k, i)}_{p}$'s.
\end{rem}

\begin{pro}{\label{rem26.03.13-1} With notation $\alpha^{(k)}_{p} \okr \sqrt{\ulamek{(i+pk)!}{(i+pk+k)!}}$
\begin{equation}\label{1.10.04}
d^{(k, i)}_{p+1} - d^{(k, i)}_{p-1} < \alpha^{(k)}_{p} \alpha^{(k)}_{p-1} \ldots \alpha^{(k)}_{1} \alpha^{(k)}_{0} \big(d^{(k, i)}_{2} - d^{(k, i)}_{0}\big)
\end{equation}}
\end{pro}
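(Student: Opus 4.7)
My plan is to prove the proposition by strong induction on $p$, reducing the telescoped inequality to a simpler one-step inequality: for all $p \geq 2$,
\[
d^{(k,i)}_{p+1} - d^{(k,i)}_{p-1} < \alpha^{(k)}_p \bigl( d^{(k,i)}_p - d^{(k,i)}_{p-2} \bigr). \qquad (\ast)
\]
Granting $(\ast)$, the full result follows easily: the base at $p = 1$, namely $d^{(k,i)}_2 - d^{(k,i)}_0 < \alpha^{(k)}_1\alpha^{(k)}_0 (d^{(k,i)}_2 - d^{(k,i)}_0)$, reduces (since $\alpha^{(k)}_1\alpha^{(k)}_0 \in (0,1)$) to the strict negativity $d^{(k,i)}_2 < d^{(k,i)}_0$, which is a direct computation from $d^{(k,i)}_1 = \alpha^{(k)}_0 d^{(k,i)}_0$ and the recurrence \eqref{eq3} at $p=1$, valid for $k\ge 3$. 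For the inductive step, the induction hypothesis applied at $p-1$ yields $d^{(k,i)}_p - d^{(k,i)}_{p-2} < \alpha^{(k)}_{p-1}\cdots\alpha^{(k)}_0(d^{(k,i)}_2 - d^{(k,i)}_0)$; multiplying this negative inequality by $\alpha^{(k)}_p > 0$ (which preserves the direction between two negative numbers) and chaining with $(\ast)$ delivers the assertion at $p$.

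Next I would turn to $(\ast)$ itself. Substituting \eqref{eq3} for $d^{(k,i)}_{p+1}$ and cancelling the term $\alpha^{(k)}_p d^{(k,i)}_p$ on both sides, $(\ast)$ becomes the single algebraic inequality
\[
 \bigl(\alpha^{(k)}_{p-1} - \alpha^{(k)}_p\bigr)\, d^{(k,i)}_{p-1} > \alpha^{(k)}_p \alpha^{(k)}_{p-1} \, d^{(k,i)}_{p-2}.
\]
The coefficient $\alpha^{(k)}_{p-1} - \alpha^{(k)}_p$ is positive by the strict monotonicity of $(\alpha^{(k)}_j)_j$, so the claim amounts to a lower bound on the ratio $d^{(k,i)}_{p-1}/d^{(k,i)}_{p-2}$. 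I would obtain it by applying \eqref{eq3} once more at index $p-1$ to expand $d^{(k,i)}_{p-1}$ in terms of $d^{(k,i)}_{p-2}$ and $d^{(k,i)}_{p-3}$ (with the convention $d^{(k,i)}_{-1}=0$ for the base case $p=2$), the $d^{(k,i)}_{p-3}$ term contributing a strictly positive surplus $(\alpha^{(k)}_{p-2}/\alpha^{(k)}_{p-3}) d^{(k,i)}_{p-3}$ on top of the elementary bound $d^{(k,i)}_{p-1}\ge \alpha^{(k)}_{p-2} d^{(k,i)}_{p-2}$.

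The main obstacle is controlling $(\ast)$ uniformly in $p$. The elementary bound $d^{(k,i)}_{p-1}\ge \alpha^{(k)}_{p-2} d^{(k,i)}_{p-2}$ alone closes the argument only when $\alpha^{(k)}_{p-2}(\alpha^{(k)}_{p-1} - \alpha^{(k)}_p) \ge \alpha^{(k)}_p\alpha^{(k)}_{p-1}$, a condition that deteriorates as $p\to\infty$ because $\alpha^{(k)}_p/\alpha^{(k)}_{p-1}\to 1$. For large $p$ one must genuinely exploit the extra $d^{(k,i)}_{p-3}$ contribution identified above. I would handle this by a parallel induction on the ratio $t_p := d^{(k,i)}_p/d^{(k,i)}_{p-1}$, which satisfies the Riccati-type relation $t_{p+1} = \alpha^{(k)}_p + \alpha^{(k)}_p/(\alpha^{(k)}_{p-1}\, t_p)$ derived from \eqref{eq3}, and is then shown to stay above the threshold $\alpha^{(k)}_{p+1}\alpha^{(k)}_p/(\alpha^{(k)}_p - \alpha^{(k)}_{p+1})$ required to close $(\ast)$ at the next step.
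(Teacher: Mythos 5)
Your skeleton coincides with the paper's: both arguments rest on the one-step inequality
\begin{equation*}
d^{(k, i)}_{p+1} - d^{(k, i)}_{p-1} < \alpha^{(k)}_{p} \big(d^{(k, i)}_{p} - d^{(k, i)}_{p-2}\big),
\end{equation*}
followed by exactly the telescoping induction you describe; your algebraic reformulation of this inequality as $\big(\alpha^{(k)}_{p-1}-\alpha^{(k)}_{p}\big)d^{(k, i)}_{p-1} > \alpha^{(k)}_{p}\alpha^{(k)}_{p-1}d^{(k, i)}_{p-2}$, i.e.\ as a lower bound on the ratio $t_{p-1}=d^{(k, i)}_{p-1}/d^{(k, i)}_{p-2}$, is correct, and so is the base case. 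The gap is in the only step carrying real content: the claim that $t_{p}$ ``is then shown to stay above the threshold'' $\alpha^{(k)}_{p+1}\alpha^{(k)}_{p}/(\alpha^{(k)}_{p}-\alpha^{(k)}_{p+1})$ is asserted, not proved, and it does not follow routinely from the relation $t_{p+1}=\alpha^{(k)}_{p}+\alpha^{(k)}_{p}/(\alpha^{(k)}_{p-1}t_{p})$. That relation yields a lower bound on $t_{p+1}$ only from an \emph{upper} bound on $t_{p}$, and vice versa, so your ``parallel induction'' must carry a two-sided invariant which you never specify. The obvious candidates fail: $t_p\Le1$ is false in general (already $t_2=\alpha^{(k)}_1+\alpha^{(k)}_1/(\alpha^{(k)}_0)^2>1$ for $k=7$, $i=0$), a uniform constant lower bound does not survive one application of the map because $\alpha^{(k)}_{p}/\alpha^{(k)}_{p-1}<1$, and the one-step bound $t_p>\alpha^{(k)}_{p-1}$ you yourself discard.

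A second symptom of the same gap: the hypothesis $k\Ge3$ must enter the proof of the one-step inequality for \emph{every} $p$, not only at the base case, since the threshold $\alpha^{(k)}_{p}\alpha^{(k)}_{p-1}/(\alpha^{(k)}_{p-1}-\alpha^{(k)}_{p})$ behaves like a constant times $p\,\alpha^{(k)}_{p}$, which tends to $0$ only for $k\Ge3$; for $k=1$ it tends to $+\infty$ while $t_p$ stays bounded, so the inequality is false for large $p$ (as it must be, $A^{(1,0)}$ being essentially selfadjoint). Your outline of the inductive step nowhere invokes $k\Ge3$, so it cannot be complete as written. For comparison, the paper proves the one-step inequality not by tracking ratios but by substituting \eqref{eq3} at index $p-1$ to eliminate $d^{(k,i)}_{p}$ and then applying the two factorial inequalities \eqref{1.4.2} and \eqref{2.4.2} from the Appendix, \eqref{1.4.2} being precisely where $k\Ge3$ is used. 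If you wish to keep the ratio-based route you must exhibit explicit sequences $L_p\Le t_p\Le U_p$ preserved by (the two-step composition of) the Riccati map with $L_p$ above the threshold; doing so forces you to prove inequalities of the same flavour as \eqref{1.4.2} and \eqref{2.4.2}, so nothing is gained over the paper's route.
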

\begin{proof} 
Prove first
\begin{equation*}
d^{(k, i)}_{p+1} - d^{(k, i)}_{p-1} < \alpha^{(k)}_{p} \left(d^{(k, i)}_{p} - d^{(k, i)}_{p-2}\right)
\end{equation*}

Using \eqref{eq3}, \eqref{1.4.2} (notice \eqref{1.4.2} \underbar{requires} $k\Ge3$) and \eqref{2.4.2} we have
\begin{align*}
d^{(k, i)}_{p} &= \frac{(i+pk-k)!}{\sqrt{(i+pk)! (i+pk-2k)!}} d^{(k, i)}_{p-2} + \sqrt{\frac{(i+pk-k)!}{(i+kp)!}} d^{(k, i)}_{p-1} \\[0.5\baselineskip]
&< \frac{(i+pk)!}{\sqrt{(i+pk-k)! (i+pk+k)!}} d^{(k, i)}_{p-2} + \left(1 - \frac{(i+pk)!}{\sqrt{(i+pk-k)! (i+pk+k)!}}\right) d^{(k, i)}_{p-1}.
\end{align*}
That gives
\begin{equation*}
\frac{(i+pk)!}{\sqrt{(i+pk-k)! (i+pk+k)!}} d^{(k, i)}_{p-1} < \frac{(i+pk)!}{\sqrt{(i+pk-k)! (i+pk+k)!}} d^{(k, i)}_{p-2} + d^{(k, i)}_{p-1} - d^{(k, i)}_{p}.
\end{equation*}
Inserting into \eqref{eq3}, we get
\begin{align*}
d^{(k, i)}_{p+1} - d^{(k, i)}_{p-1} &< \frac{(i+pk)!}{\sqrt{(i+pk-k)! (i+pk+k)!}} d^{(k, i)}_{p-2} + \left(\sqrt{\frac{(i+pk)!}{(i + pk + k)!}} - 1\right) d^{(k, i)}_{p} \\
& < \left(\sqrt{\frac{(i+pk)!}{(i + pk + k)!}} - 1\right) \left(d^{(k, i)}_{p} - d^{(k, i)}_{p-2}\right) < \sqrt{\frac{(i+pk)!}{(i + pk + k)!}} \left(d^{(k, i)}_{p} - d^{(k, i)}_{p-2}\right).
\end{align*}
Now the induction argument makes \eqref{1.10.04}.
\end{proof}

\begin{cor}\label{t2.10.02}
The sequence $(d^{(k, i)}_{p})_{p}$ is convergent.
\end{cor}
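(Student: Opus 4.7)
The strategy is to apply the monotone convergence theorem to the even and odd subsequences of $(d^{(k,i)}_p)_p$ separately. Proposition~\ref{rem26.03.13-1} supplies the monotonicity, and Remarks~\ref{2.1}, \ref{rem3} supply the bound from below.

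First I would verify that $d^{(k,i)}_2 - d^{(k,i)}_0 < 0$. The recurrence \eqref{eq3} applied at $p=0,1$ with the boundary condition \eqref{2.16.04} gives $d^{(k,i)}_0 = 1$, $d^{(k,i)}_1 = \sqrt{i!/(i+k)!}$, and $d^{(k,i)}_2 = (i+k)!/\sqrt{i!\,(i+2k)!} + \sqrt{i!/(i+2k)!}$. The elementary fact $((i+k)!)^2 < i!\,(i+2k)!$ (each factor of the left product is strictly smaller than the corresponding factor of the right) combined with the obvious smallness of $\sqrt{i!/(i+2k)!}$ for $k\Ge 3$ then yields $d^{(k,i)}_2 < 1$.

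With $d^{(k,i)}_2 - d^{(k,i)}_0 < 0$ and every $\alpha^{(k)}_j > 0$, Proposition~\ref{rem26.03.13-1} immediately gives $d^{(k,i)}_{p+1} - d^{(k,i)}_{p-1} < 0$ for every $p \Ge 1$. Specialising $p$ to odd values shows the even subsequence $(d^{(k,i)}_{2m})_m$ is strictly decreasing; specialising $p$ to even values does the same for the odd subsequence $(d^{(k,i)}_{2m+1})_m$. Since each $d^{(k,i)}_p > 0$ by Remarks~\ref{2.1} and \ref{rem3}, both subsequences are bounded below by $0$, and the monotone convergence theorem delivers limits $L_e,\,L_o \Ge 0$.

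The hard part is to upgrade this subsequential convergence to convergence of the full sequence by showing $L_e = L_o$. Passing $p\to\infty$ directly in \eqref{eq3} is uninformative, since the coefficient $\beta_p \okr (i+pk)!/\sqrt{(i+pk-k)!(i+pk+k)!}$ tends to $1$ and $\gamma_p \okr \sqrt{(i+pk)!/(i+pk+k)!}$ tends to $0$, producing a tautology $L=L$. One must therefore use the refined Stirling asymptotics $\beta_p - 1 \sim -k/(2(i+pk))$ and $\gamma_p = O(p^{-k/2})$: summing the identity
\[
d^{(k,i)}_{p+1} - d^{(k,i)}_{p-1} = (\beta_p - 1)\,d^{(k,i)}_{p-1} + \gamma_p\,d^{(k,i)}_p
\]
over even (respectively odd) $p$ yields a telescoping, hence bounded, left-hand side against a right-hand side that behaves like $-\tfrac{k}{2}L_o\sum 1/p$ (respectively $-\tfrac{k}{2}L_e\sum 1/p$). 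The boundedness of the partial sums on the left then forces $L_o = L_e = 0$, and so $(d^{(k,i)}_p)_p$ converges, in fact to $0$.
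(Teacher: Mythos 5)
Your argument is correct, but it takes a genuinely different route from the paper's. The paper goes straight for the Cauchy criterion: it telescopes $|d^{(k,i)}_{p+m}-d^{(k,i)}_{p}|$ through same-parity differences, dominates each of these via Proposition~\ref{rem26.03.13-1} by a term of the convergent series $\sum_{r}\alpha^{(k)}_{r}\alpha^{(k)}_{r-1}\cdots\alpha^{(k)}_{0}=\sum_{r}\sqrt{i!/(kr+k+i)!}$ of \eqref{25Apr13-2}, and concludes from the vanishing of its tails; the resulting quantitative estimate \eqref{1bis.16.04} is then recycled in the proof of Theorem~\ref{t1.10.02}. You instead apply monotone convergence to each parity class and identify the two limits by a separate asymptotic argument. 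Three remarks on the comparison. (i) Your verification of $d^{(k,i)}_{2}<d^{(k,i)}_{0}$ is the one hand-wavy spot, but it closes in one line: since $\sqrt{i!/(i+2k)!}<\sqrt{(i+k)!/(i+2k)!}$, the bound $d^{(k,i)}_{2}=\frac{(i+k)!}{\sqrt{i!\,(i+2k)!}}+\sqrt{\frac{i!}{(i+2k)!}}<\frac{(i+k)!}{\sqrt{i!\,(i+2k)!}}+\sqrt{\frac{(i+k)!}{(i+2k)!}}<1$ is exactly the paper's inequality \eqref{1.4.2} at $p=1$; this sign is also what the paper uses tacitly when it reads $d^{(k,i)}_{p}<d^{(k,i)}_{p-2}$ off \eqref{1.10.04} in the proof of Theorem~\ref{t1.10.02}. (ii) Your ``hard part'' is genuine added value rather than an artifact of your method: the telescoping chain in the paper's proof only links indices of equal parity, so as written it delivers convergence of the even and odd subsequences separately, and the coincidence of their limits is precisely what your summation of $(\beta_{p}-1)\,d^{(k,i)}_{p-1}$ against the divergent harmonic series supplies (your constant in $\beta_{p}-1\sim -k/(2(i+pk))$ should be $-k^{2}/(2(i+pk))$, but only the order $1/p$ matters, and the summability of $\gamma_{p}=O(p^{-k/2})$ indeed uses $k\Ge3$). (iii) Your route proves the stronger statement $\lim_{p}d^{(k,i)}_{p}=0$, which would let one bypass the Raabe-criterion elimination of case \eqref{4.16.04} in the proof of Theorem~\ref{t1.10.02}; the price is that you must additionally invoke the positivity of the $d^{(k,i)}_{p}$'s and the sign of $d^{(k,i)}_{2}-d^{(k,i)}_{0}$, which the paper's absolute-value estimate is designed to avoid.
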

\begin{proof}
It is clear that 
\begin{equation}\label{25Apr13-2}
\sum_{r=p}^{\infty}\alpha^{(k)}_{r}\alpha^{(k)}_{r-1}\cdots\alpha^{(k)}_{0} = \sum_{r=p}^{\infty} \sqrt{\frac{i!}{(kr+k+i)!}}<+\infty. 
\end{equation}
Notice that
\begin{align}\label{1bis.16.04}
\begin{split}
|d^{(k, i)}_{p+m}-d^{(k, i)}_{p}&|\Le|d^{(k, i)}_{p+m}-d^{(k, i)}_{p+m-2}|+|d^{(k, i)}_{p+m-2}-d^{(k, i)}_{p+m-4}|+\cdots+|d^{(k, i)}_{p+2}-d^{(k, i)}_{p}|\\&
\Le\big(d^{(k, i)}_{2}-d^{(k, i)}_{0}\big)\sum_{r=p}^{m+p}\alpha^{(k)}_{r}\alpha^{(k)}_{r-1}\cdots\alpha^{(k)}_{0}.
\end{split}
\end{align}
Because RHS is equal to $\big(d^{(k, i)}_{2}-d^{(k, i)}_{0}\big)$ multiplied a Cauchy fragment  of a convergent series \eqref{25Apr13-2}
 LHS tends to $0$ which says $(d^{(k, i)}_{p})_{p}$ is a Cauchy sequence, hence it is convergent.
\end{proof}
\begin{thm}\label{t1.10.02}
None of the operators $A^{(k,i)}$, $k\Ge3$ and  $i=0,\ldots,k-1$, is essentially selfadjoint.
\end{thm}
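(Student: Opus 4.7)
The plan is to invoke the von Neumann deficiency-index criterion recalled at the start of Section \ref{s2.22.09a}: to show $A^{(k,i)}$ is not essentially selfadjoint I will exhibit, for each $k\Ge 3$ and $i\in\{0,\ldots,k-1\}$, a nonzero vector $f\in\hhc^{(k,i)}$ in $\ob{A^{(k,i)}-\I}^{\perp}$. The analysis of Subsection \ref{28.08} has already translated this search into producing a square-summable solution of the three-term recurrence \eqref{08Feb2013-3}; after the gauge substitution $d_p=\E^{\I p\theta}f_p^{(k,i)}$ together with Remark \ref{1}, the task reduces to verifying that the positive sequence $(d_p^{(k,i)})_{p\ge 0}$ obtained by solving \eqref{eq3} with $d_0=1$, $d_{-1}=0$ belongs to $\ell^2$.

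Corollary \ref{t2.10.02} already supplies the essential structural fact: $(d_p^{(k,i)})$ is Cauchy, with differences $|d_{p+m}-d_p|$ dominated by the tail of the rapidly convergent series $\sum_r\sqrt{i!/(kr+k+i)!}$. To upgrade convergence to $\ell^2$-membership I would first rule out a positive limit: if $L\okr\lim_p d_p^{(k,i)}>0$, then reading $d_{p+1}-d_{p-1}=(\beta_p^{(k)}-1)d_{p-1}+\alpha_p^{(k)}d_p$ off \eqref{eq3} and using the asymptotics $\beta_p^{(k)}-1\sim-k/(2p)$ forces an increment of order $1/p$, which contradicts the super-polynomial bound of Proposition \ref{rem26.03.13-1}; hence $L=0$. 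The same balance $d_{p+1}\sim(1-k/(2p))d_{p-1}$ then iterates to the quantitative rate $d_p^{(k,i)}=O(p^{-k/4})$, whence $\sum_p|d_p^{(k,i)}|^{2}=O\!\big(\sum_p p^{-k/2}\big)<+\infty$ precisely because $k\Ge 3>2$.

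With $(d_p^{(k,i)})\in\ell^2$ in hand, $f^{-}\okr\sum_p\E^{-\I p\theta}d_p^{(k,i)}e_p^{(k,i)}$ is a nonzero element of $\hhc^{(k,i)}$ whose Fourier coefficients solve \eqref{08Feb2013-3} with the ``$-$'' sign, and Proposition \ref{pro2} delivers the companion $f^{+}$ for the ``$+$'' sign through alternating signs at the odd indices. This yields $n_{+}(A^{(k,i)})\ge 1$ and $n_{-}(A^{(k,i)})\ge 1$, and the von Neumann criterion then rules out essential selfadjointness of every $A^{(k,i)}$ with $k\Ge 3$ and $i=0,\ldots,k-1$.

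The genuine obstacle will be the rate-extraction step: passing from the super-polynomial control of oscillations in Corollary \ref{t2.10.02} to the much slower polynomial decay $d_p^{(k,i)}=O(p^{-k/4})$ of the sequence itself. Corollary \ref{t2.10.02} bounds only fluctuations around the limit, so one must reopen the recurrence \eqref{eq3} and isolate the $k/(2p)$ correction in $\beta_p^{(k)}-1$; this correction is the very term responsible for driving $d_p^{(k,i)}$ slowly-but-summably to zero exactly when $k\Ge 3$, and it is what makes the threshold $k=3$ the critical one.
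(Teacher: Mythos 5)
Your proposal is correct in outline and reaches the conclusion by a genuinely different final step from the paper's. Both arguments share the reduction to square-summability of the positive solution $(d^{(k,i)}_p)_p$ of \eqref{eq3} with $d^{(k,i)}_0=1$, $d^{(k,i)}_{-1}=0$, and both lean on Corollary \ref{t2.10.02} and Proposition \ref{pro2}; they part ways afterwards. The paper runs the dichotomy \eqref{4.16.04}/\eqref{3.16.04}: it excludes a nonzero limit by applying Raabe's criterion to the series \eqref{7}, and in the zero-limit case it bounds $d^{(k,i)}_p$ by tails of the super-exponentially convergent series $\sum_{r}\sqrt{i!/(kr+k+i)!}$, concluding that \eqref{7} converges and hence so does $\sum_p(d^{(k,i)}_p)^2$. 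You instead read the one-step asymptotics $d^{(k,i)}_{p+1}\sim(1-\tfrac{k}{2p})\,d^{(k,i)}_{p-1}$ off \eqref{eq3} and iterate it to the polynomial rate $d^{(k,i)}_p=O(p^{-k/4})$, then sum squares. Your route buys more: $p^{-k/4}$ is the sharp rate (it is the classical $1/\sqrt{b_p}$ decay of the deficiency solution of a Jacobi matrix whose off-diagonal entries $b_p$ grow like $(pk)^{k/2}$), and it makes the threshold visible as the condition $\sum_p p^{-k/2}<+\infty$, i.e.\ exactly $k>2$. It also sidesteps a delicate point in the paper's argument: with $d^{(k,i)}_p\asymp p^{-k/4}$ the series \eqref{7} actually diverges for $k=3,4$, and only the squared series converges there, so your argument targets precisely the right quantity. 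Two details to tighten: (i) in excluding a positive limit $L$, the cleanest contradiction is with positivity of the $d^{(k,i)}_p$ --- summing the drift $-kL/(2p)$ over an arithmetic progression of indices drives the sequence to $-\infty$ --- rather than with Proposition \ref{rem26.03.13-1}, whose bound \eqref{1.10.04} is one-sided and has a negative right-hand side because $d^{(k,i)}_2<d^{(k,i)}_0$; (ii) the rate extraction must absorb the perturbation $\alpha^{(k)}_p d^{(k,i)}_p$, but since $\alpha^{(k)}_p$ decays super-exponentially and $(d^{(k,i)}_p)_p$ is bounded, the variation-of-constants sum contributes only an $O(p^{-k/4})$ correction, so the step you flag as the genuine obstacle does go through.
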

\begin{proof}
As already experienced it is enough to show that the series $\sum_{p=0}^{\infty}(d^{(k, i)}_{p}){}^{2}$ is convergent. We already know, Corollary \ref{t2.10.02}, the sequence  $(d^{(k, i)}_{p})_{p}$ is convergent.
Then either
\begin{equation}\label{4.16.04}
\lim_{p\to+\infty}d^{(k, i)}_{p}\neq0.
\end{equation}
or
\begin{equation}\label{3.16.04}
\lim_{p\to+\infty}d^{(k, i)}_{p}=0
\end{equation}

Let us go on with \eqref{1bis.16.04} as follows\begin{align}\label{1ter.16.04}
\begin{split}
|d^{(k, i)}_{p+m}-d^{(k, i)}_{p}|&
\Le\big(d^{(k, i)}_{2}-d^{(k, i)}_{0}\big)\sum_{r=p}^{m+p}\alpha^{(k)}_{r}\alpha^{(k)}_{r-1}\cdots\alpha^{(k)}_{0}
={\rm const}\,\alpha^{(k)}_{p}\cdots\alpha^{(k)}_{0}
\\&\times\sum_{r=p+1}^{m+p}\alpha^{(k)}_{r}\alpha^{(k)}_{r-1}\cdots\alpha^{(k)}_{p+1}
\Le{\rm const}\,\alpha^{(k)}_{p}\cdots\alpha^{(k)}_{0}\sum_{r=p+1}^{\infty}\alpha^{(k)}_{r}\alpha^{(k)}_{r-1}\cdots\alpha^{(k)}_{p+1}\\&\stackrel{\eqref{25Apr13-2}}{=}{\rm const}\,\sum_{r=p+1}^{\infty} \sqrt{\frac{i!}{(kr+k+i)!}}
\Le{\rm const}\,\sum_{r=0}^{\infty} \sqrt{\frac{i!}{(kr+k+i)!}}. 
\end{split}
\end{align}

Suppose \eqref{4.16.04} holds. Then for the sequence
\begin{equation*}
x^{(k, i)}_{p} \okr \frac{d^{(k, i)}_{p}}{d^{(k, i)}_{p+1}}.
\end{equation*}
we get immediately $x^{(k, i)}_{p}\to1$ as $p$ goes to $+\infty$.

From \eqref{1.10.04} we have $d^{(k, i)}_{p} < d^{(k, i)}_{p -2}$ and $x^{(k, i)}_{p} < x^{(k, i)}_{p -2} x^{(k, i)}_{p-1} x^{(k, i)}_{p}$. Because
\begin{equation*}
\lim_{p\to+\infty} \sqrt{\frac{(i+pk-k)!}{(i+pk)!}} = 0 \quad \text{and} \quad \lim_{p\to+\infty} \frac{\sqrt{(i+pk-k)! (i+pk+k)!}}{(i+pk)!} = 1.
\end{equation*}
we get
\begin{align*}
\lim_{p\to+\infty} p \left(x^{(k, i)}_{p} - 1\right) &< \lim_{p\to+\infty} p \left(x^{(k, i)}_{p -2} x^{(k, i)}_{p-1} x^{(k, i)}_{p} - 1\right) = \lim_{p\to+\infty} p \left(\frac{\sqrt{(i+pk-k)! (i+pk+k)!}}{(i+pk)!}\, x^{(k, i)}_{p-2} \right.\nonumber \\
&\left.- \sqrt{\frac{(i+pk-k)!}{(i+pk)!}}\, x^{(k, i)}_{p-2} x^{(k, i)}_{p} - 1\right) = \frac{k}{2}
\end{align*}
and consequently
\begin{equation*}
\lim_{p\to+\infty} p \left[\big(x^{(k, i)}_{p}\big)^{2} - 1\right] = \lim_{p\to+\infty} p \left(x^{(k, i)}_{p} - 1\right) \lim_{p\to+\infty} \left(x^{(k, i)}_{p} +1\right) < 2 \frac{k}{2} = k.
\end{equation*}
Due to Raabe's criterion we have 
\begin{equation}\label{7}
\sum_{p=0}^{\infty}d^{(k,i)}_{p}
\end{equation}
is convergent. This excludes the case \eqref{4.16.04} to hold.

If \eqref{3.16.04} happens, then passing in \eqref{1ter.16.04} with $m$ to $+\infty$ we get
\begin{align*}
|d^{(k, i)}_{p}|&\Le{\rm const}\,\sum_{r=0}^{\infty} \sqrt{\frac{i!}{(kr+k+i)!}}\end{align*}
Because in this case for $p$ sufficiently large $d^{(k, i)}_{p}{}^{2} \Le d^{(k, i)}_{p}$ we have
\begin{equation*}
\sum_{p=0}^{\infty}(d^{(k, i)}_{p}{})^{2}\Le {\rm{const}'}\sum_{p=0}^{\infty}d^{(k, i)}_{p}\Le{\rm const}''\sum_{p=0}^{\infty}\alpha^{(k)}_{r}\alpha^{(k)}_{r-1}\cdots\alpha^{(k)}_{0}\sum_{r=0}^{\infty} \sqrt{\frac{i!}{(kr+k+i)!}}.
\end{equation*}
 Therefore  convergence of the series \eqref{7} has been proved.
\end{proof}

\section{Generating $\exp\{\I \!tA^{(k)}_{\xi}\}$. Squeeze operators of any order?}\label{s2.7.07}
\subsection{The groundwork thought over}\label{s3.7.07}
Suppose we are given a selfadjoint operator $B$, if $E$ stands for its spectral measure then the spectral integral 
\begin{equation}\label{1.8.07}
\int_{\rrb}\E^{\I\!tx}E(\D x)
\end{equation}
(understood as usually in the weak or strong operator topology) gives rise according to the rules of functional calculus   to a one parameter ($t\in\rrb$) family of unitary operators which is customarily denoted as $\E^{\I\!tB}$. Due to the continuity property of spectral integral it is strongly continuous in $t$.  If $B$ is essentially selfadjoint then its closure $\sbar B$ is selfadjoint so one can think of $\E^{\I\!t\sbar B}$.

On the other hand one has a definition: a family $\{U(t)\}_{t\in\rrb}$ of unitary operators in $\hhc$ is said to be a {\em strongly continuous one-parameter unitary group} if 
\begin{enumerate}
\item[(a)] $U(s+t)=U(s)U(t)$, $s,t\in\rrb$;
\item[(b)] $\lim_{h\to 0} \big(U(t+h)-U(t)\big)f=f$ for $f\in\hhc$ and $t\in\rrb$.
\end{enumerate}
It is clear that the  exponential family $(\E^{\I\!tB})_{t\in\rrb}$ just defined is a strongly continuous one-parameter unitary group.
The celebrated Stone theorem shows the way back: every strongly continuous one-parameter unitary group is of the form $(\E^{\I\!tB})_{t\in\rrb}$ with a uniquely determined selfadjoint operator $B$; it establishes a bijection between $(\E^{\I\!tB})_{t\in\rrb}$ and $(U(t))_{t\in\rrb}$ making them to be replaceable. In conclusion, the spectral integral definition \eqref{1.8.07} of $(\E^{\I\!tB})_{t\in\rrb}$ is the primary way to defining the unitary group in question and this is made possible at least.

The operator $B$ is pretty often called the (infinitesimal) {\em generator} of the group $(U(t))_{t\in\rrb}$ and is defined by 
\begin{align*}
\dz B\okr\zb{f\in\hhc}{\poch{}t{}U(t)|_{t=0}f\okr\lim_{h\to0}h^{-1}\big(U(h)-I\big)f\text{ exists}},\quad
\I\! Bf\okr \poch{}t{}U(t)|_{t=0}f,\quad f\in\dz B.
\end{align*}
As a kind of  extras attached to Stone's theorem we have for  $t\in\rrb$
\begin{gather} \label{3.8.07}
U(t)\dz B\subset \dz B,\quad
\poch{}t{}U(t)f=\I\!U(t)Bf=\I\!BU(t)f \text{ for }f\in\dz B.
\end{gather}
\begin{rem}\label{t1.11.01}
The repeated use of the second of \eqref{3.8.07} leads to
\begin{equation}
\poch{}t{n}U(t)f=\I^{n}U(t)B^{n}f=B^{n}U(t)f \text{ for }f\in\bigcap_{i=0}^{n}\dz {B^{i}}.\label{3a.8.07}
\end{equation}
Therefore the question is for which $f$'s the Taylor series (the very left expression)
\begin{equation*}
U(t)f=\sum_{n=0}^{\infty}\frac{t^{n}}{n!}\poch{}t{n}U(t)\res{t=0}f=\sum_{n=0}^{\infty}\frac{(\I\!t)^{n}}{n!}U(t)B^{n}f=\sum_{n=0}^{\infty}\frac{(\I\!t)^{n}}{n!}B^{n}f.
\end{equation*}
converges and how.
\end{rem}

\subsection{More on the role of analytic vectors}\label{s2.8.07} 
The spectral integral \eqref{1.8.07} allows to determine the unitary group once the spectral measure of its generator is known. A practical question is if one can avoid the spectral representation trying to promote suggestively a kind of Taylor series expansion by means of $\ccc^{\infty}$ vectors. More precisely, starting with an essentially selfadjoint operator $A$ the question is for which $f$'s the definition
\begin{equation*}
U(t)f\okr\sum_{n=0}^{\infty}\frac{(\I\!t)^{n}}{n!}A^{n}f
\end{equation*}
makes sense.
This has to be handled with some caution. An insight into the proof of Lemma 5.1 in \cite{nel} shows how to make this construction possible in the case when the set of analytic vectors $\aac{(A)}$ of a symmetric operator $A$ is dense.

 The construction in \cite[Lemma 5.1]{nel} is local and can be reiterated  resulting in the desired group. In the case  when the set of entire vectors $\eec(A)$ is dense the construction can be made smoother, giving at once the group $(U(t))_{t\in\rrb}$ the operator $\sbar A$ generates.
 Therefore $\eec(A)$ is a subpace of $\hhc$ for which one
can certainly replace integral with summation in the middle equality of
\begin{equation}\label{1.9.07}
\E^{\I\!t\sbar A}f\stackrel{\eqref{1.8.07}}=\int_{\rrb}\;\sum^{\infty}_{n=0}\frac{(\I\!t)^{n}}{n!}x^{n}E(\D x)f
=\sum^{\infty}_{n=0}\int_{\rrb}\frac{(\I\!t)^{n}}{n!}x^{n}E(\D x)f
=\sum^{\infty}_{n=0}\frac{(\I\!t)^{n}}{n!}A^{n}f,\; f\in\eec(A);
\end{equation}
as a matter of fact the first equality holds for all $f\in\hhc$.

\begin{rem}\label{t1.19.01}
The role of $\ccc^{\infty}$ vectors in determining essential selfadjointness is described in some details in Fact \ref{t3.30.11}. Though selfadjoint operators themselves have enough $\ccc^{\infty}$ vectors of any kind appearing there, an essentially selfadjoint ones (in particular the candidate for) may not have  even quasianalytic vectors, they simply may not fit in with the domain of an operator which {\it a priori} is not closed. This makes the unseen at a first glance difference we want to put strong emphasis on. An acute awareness of this fact helps to monitor the situation we are in.
\end{rem}

	\subsection{The displacement and squeeze operators}\label{s3.22.09}
	Corollary \ref{t1.23.09} or, alternatively, the results of Subsection \ref{s1.05.017} lead directly to what the majority of mathematical physicist accept as granted (notice that what  customarily emerges in the definitions is the  complex parameter $\alpha=-\I\xi$ ).
	\begin{thm}\label{t1.22.09}
	The displacement $\exp\{\I tA_{\xi}^{(1)}\}$ and squeeze $\exp\{\I tA_{\xi}^{(2)}\}$ operators form a group of unitaries for $t\in\rrb$.
	\end{thm}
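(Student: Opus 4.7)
The plan is to observe that once essential selfadjointness of $A_{\xi}^{(k)}$ for $k=1,2$ has been established, the theorem becomes an almost immediate consequence of the spectral--theoretic machinery collected in Subsection \ref{s3.7.07}. I would therefore not attempt any fresh estimates; everything that makes the statement true has already been pre--paid.

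First I would invoke Corollary \ref{t1.23.09} (whose affirmative part is reconfirmed independently in Subsection \ref{s1.05.017} via the Hermite and Meixner--Pollaczek polynomials) to assert that the closure $\sbar{A_{\xi}^{(k)}}$ is a genuine selfadjoint operator for $k=1,2$. Let $E$ denote its spectral measure. Then the spectral integral \eqref{1.8.07} produces, via functional calculus, a family $\bigl(\E^{\I t\sbar{A_{\xi}^{(k)}}}\bigr)_{t\in\rrb}$ of unitaries which, as explicitly noted in Subsection \ref{s3.7.07}, is automatically a strongly continuous one--parameter unitary group. By agreement this is what is meant by $\exp\{\I t A_{\xi}^{(k)}\}$, the slight abuse of notation being harmless precisely because essential selfadjointness identifies the generator uniquely.

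A second, more hands--on route available for the same conclusion would bypass the spectral representation and lean directly on the $\ccc^{\infty}$--vector analysis of Section~\ref{s2.22.09}. For $k=1$ every basis vector $e_{n}$ lies in $\eec\bigl(A_{\xi}^{(1)}\bigr)$ by part \liczp 1 of Proposition~\ref{9Sep13-5}, so the equality chain \eqref{1.9.07} applies verbatim and the group action on the dense core $\ddc$ is literally given by the convergent Taylor series $\sum_{n}(\I t)^{n}\bigl(A_{\xi}^{(1)}\bigr)^{n}f/n!$. For $k=2$ one has only analytic (not entire) vectors with a finite radius of convergence, so I would instead invoke Nelson's local construction \cite[Lemma 5.1]{nel} and iterate it, exactly as flagged in the opening of Subsection \ref{s2.8.07}, to sew the local pieces into the full one--parameter group.

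The step I would expect to require the least effort is the present one, because the real obstacle --- essential selfadjointness of $A_{\xi}^{(k)}$ for $k=1,2$ --- has been absorbed into the earlier sections. The role of Theorem \ref{t1.22.09} is thus mainly declarative: to crystallize, in the language customary to physicists, the mathematical content already secured and to serve as the positive counterpoint to the negative outcome for $k\Ge 3$ delivered by Theorem \ref{t1.10.02}.
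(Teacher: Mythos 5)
Your proposal is correct and follows essentially the same route as the paper: the paper derives Theorem \ref{t1.22.09} directly from Corollary \ref{t1.23.09} (or, alternatively, the deficiency-index computations of Subsection \ref{s1.05.017}) together with the spectral-integral/Stone-theorem groundwork of Subsection \ref{s3.7.07}, exactly as you do. Your additional remarks on entire versus analytic vectors and Nelson's local construction are consistent with the paper's own discussion in Subsection \ref{s2.8.07} and add nothing that contradicts it.
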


	Keeping up with the notations of Subsection \ref{s1.10.11} the important information launched in Corollary \ref{t1.7.07} can be encapsulated now. It sheds more light on how the squeeze operator behaves.
	\begin{cor}\label{t1.10.11}
	The Hilbert space $\hhc$ decomposes as $\hhc=\hhc^{(2,0)}\bigoplus\hhc^{(2,1)}$ with\, $\hhc^{(2,0)}=\clolin\zb{e_{2n}}{n=0,1,\ldots}$ and $\hhc^{(2,1)}=\clolin\zb{e_{2n+1}}{n=0,1,\ldots}$. It forces  the squeeze operators $\exp\{\I tA_{\xi}^{(2)}\}$ to decompose accordantly as
	\begin{equation*}
	\exp\{\I tA_{\xi}^{(2)}\}=\exp\{\I t|\xi|A^{(2,0)}\}\bigoplus\exp\{\I t|\xi|A^{(2,1)}\}
	\end{equation*}
	where $A^{(2,i)}$, $i=0,1$, are Jacobi operators acting as stated by \eqref{1.18.06} in $\hhc^{(2,i)}$ respectively. 
	
	Moreover, the operators $\exp\{\I t|\xi|A^{(2,i)}\}$, $i=0,1$, can be retrieved from \eqref{1.18.06} on the linear spaces $\ddc^{(2,i)}$ defined by \eqref{1.16.06} as they are composed of analytic vectors of the operators $A^{(2,i)}$.
	\end{cor}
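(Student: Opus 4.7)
The plan is to assemble four ingredients already in place: the decomposition \eqref{1.16.06}, the reducing property of Proposition \ref{1t.15.06}, essential selfadjointness from Corollary \ref{t1.23.09} (the $k=2$ half), and the analytic-vector estimate in Proposition \ref{9Sep13-5}\,\liczp 1. First I will note that the split of the orthonormal basis $(e_{n})_{n=0}^{\infty}$ into its even- and odd-indexed subsequences gives $\hhc=\hhc^{(2,0)}\oplus\hhc^{(2,1)}$ at once from \eqref{1.16.06}. Proposition \ref{1t.15.06} (applied with $k=2$) then says each $\hhc^{(2,i)}$ reduces $A^{(2)}$; since $A^{(2)}_{\xi}=|\xi|A^{(2)}$ by the discussion in Subsection \ref{s1.10.11}, the same subspaces reduce $A^{(2)}_{\xi}$, and therefore its closure too. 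By Corollary \ref{t1.23.09} the closure $\overline{A^{(2)}_{\xi}}$ is selfadjoint, and by Corollary \ref{t1.7.07} the parts are exactly $|\xi|\,\overline{A^{(2,i)}}$.

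Next I will pass from operators to their exponentials. Here I appeal to the general fact that if a selfadjoint operator $B$ is reduced by an orthogonal decomposition $\hhc=\hhc_{1}\oplus\hhc_{2}$, then the spectral measure of $B$ splits accordingly, and hence, via the functional calculus \eqref{1.8.07}, every bounded Borel function of $B$ splits as a direct sum, in particular the unitary group $\{\exp(\I tB)\}_{t\in\rrb}$. Applied to $B=\overline{A^{(2)}_{\xi}}$ with the parts identified above, this delivers the displayed decomposition of $\exp\{\I tA_{\xi}^{(2)}\}$.

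For the closing sentence of the statement I will use Proposition \ref{9Sep13-5}\,\liczp 1: each basis vector $e^{(2,i)}_{p}$ is an analytic vector of $A^{(2,i)}$ with radius at least $1/(2\sqrt 2)$. Since $\ddc^{(2,i)}=\lin\big(e^{(2,i)}_{p}\big)_{p=0}^{\infty}$, finite linear combinations are also analytic, so $\ddc^{(2,i)}\subset\aac(A^{(2,i)})$. Feeding this into the chain of equalities \eqref{1.9.07} (whose middle step is exactly what analyticity permits) I will obtain, for sufficiently small $|t|\,|\xi|$,
\begin{equation*}
\exp\{\I t|\xi|A^{(2,i)}\}f=\sum_{n=0}^{\infty}\frac{(\I t|\xi|)^{n}}{n!}\big(A^{(2,i)}\big)^{n}f,\qquad f\in\ddc^{(2,i)},
\end{equation*}
where each power $\big(A^{(2,i)}\big)^{n}f$ is read off explicitly from \eqref{1.18.06}.

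The one genuine subtlety (what I expect to be the main obstacle) is that, in contrast with $k=1$ where the basis vectors are \emph{entire}, for $k=2$ the radius of analyticity is only $1/(2\sqrt 2)$, so the Taylor series does not by itself recover $\exp\{\I tA_{\xi}^{(2,i)}\}$ for every real $t$. To cover all $t\in\rrb$ one must propagate the local action by the group law of Stone's theorem, in the spirit of the reiteration argument cited from \cite{nel} just after \eqref{1.9.07}; this step, however, is automatic once the local Taylor formula has been established.
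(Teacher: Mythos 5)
Your proposal is correct and assembles exactly the ingredients the paper intends this corollary to encapsulate: the splitting \eqref{1.16.06} and reduction from Proposition \ref{1t.15.06}, selfadjointness of the closure via Corollary \ref{t1.23.09}, the splitting of the spectral measure and hence of the unitary group \eqref{1.8.07}, and the analytic-vector Taylor representation \eqref{1.9.07} on $\ddc^{(2,i)}$ from Proposition \ref{9Sep13-5}\,\liczp 1. You also correctly flag the only real subtlety --- that for $k=2$ the basis vectors are analytic but not entire, so the series is local in $t$ and must be propagated by the group law as in Nelson's reiteration --- which is precisely the point the paper makes in Subsection \ref{s2.8.07}.
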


\subsection{What happens if $B$ is not essentially selfadjoint - further developments}
Due to Na\u{\i}mark a selfadjoint extension of a symmetric operator  $A$ always exists (cf. \cite[Proposition 3.7]{kon}) if one allows it to be in a larger space, say $\kkc$, isometrically including $\hhc$. On the other hand, if $A$ has  equal deficiency indices, the von Neumann theory provides with  a plenty of selfadjoint extensions still within $\hhc$. Even if $A$ is a Jacobi operator having deficiency indices $(1,1)$, Na\u{\i}mark extensions  are at least as much compelling  as von Neumann ones, look at \cite{darek} for a stimulating example and its analytic background, and some whereabouts at \cite{lms}.

Pick  $B$ be either von Neumann's or Na\u{\i}mark's extension of $A$. Then $(\E^{\I\!tB})_{t\in\rrb}$ is well defined as described above; denote the group alternatively by  $(U^{(B)}(t))_{t\in\rrb}$ stressing on its dependence  on the choice of a selfadjoint extension of $A$.

Passing to the operator $A$ with \underbar{invariant} domain, that is $A\dz A\subset \dz A$, which is our case we can still get something interesting. Because $U^{(B)}(t)Bf=BU^{(B)}(t)f$  for  $f\in\dz B$ (the second part of \eqref{3.8.07}) and because $A\subset B$ we get from \eqref{3a.8.07}

\begin{gather}\label{3a.8.07}
\poch{}t{n}U^{(B)}(t)f=\I^{n}U^{(B)}(t)A^{n}f=\I^{n}B^{n}U^{(B)}(t)f \text{ for $n=0,1\ldots$ and }f\in\dz A
\end{gather}
regardless of the extension $B$.

 Despite the fact that for $k\Ge3$ squeeze operators $\exp\{\I tA_{\xi}^{(k)}\}$
do not exist the situation is not completely hopeless. From the above we get a recipe which  can be read as follows: taking $A=A^{(k,i)}$ for any $i$ with $k$ fixed we get a selfadjoint extension $B^{(k,i)}$ of $A^{(k,i)}$ in some $\kkc^{(k,i)}$ such that
\begin{equation*}
\poch{}t{n}U^{(B^{(k,i)})}(t)f=(\I)^{n}U^{(B^{(k,i)})}(t)(A^{(k,i)})^{n}f=(\I)^{n}(B^{(k,i)})^{n}U^{(B^{(k,i)})}(t)f \; \,\text{for } n=0,1\ldots \text{ and }f\in\dz {A^{(k,i)}}.
\end{equation*}
Summing up the above we come to the  operator $B^{(k)}\okr\bigoplus_{i=0}^{k-1}B^{(k,i)}$, selfadjoint in the space $\kkc^{(k)}\okr\bigoplus_{i=0}^{k-1}\kkc^{(k,i)}$, such that $B^{(k)}$ extends $A^{(k)}\okr\bigoplus_{i=0}^{k-1}A^{(k,i)}$ and
\begin{align*}
\poch{}t{n}U^{(B^{(k)})}(t)f=(\I)^{n}U^{(B^{(k)})}(t)(A^{(k)})^{n}f=(\I)^{n}(B^{(k)})^{n}U^{(B^{(k)})}(t)f\;\,\text{ for $n=0,1\ldots$ and }f\in\dz {A^{(k)}}.
\end{align*}
This opens a lot of possibilities which {we intend to explore  in our future research.}

\section{Back to $k=1$ and $k=2$. Models}\label{aaaa}

Because for $k=1,2$ the operator $A^{(k)}_{\xi}$ is essentially selfadjoint, $(\exp[\I tA^{(k)}_{\xi}])_{t\in\rrb}$ is a group of unitary operators (cf. Theorem \ref{t1.22.09}).

With \framebox{$z=\I t\xi$} we have that the {\em displacement} operator
\begin{equation*}
D(z)=D(t,\xi)\okr\exp[\I t A_{\xi}^{(1)}]=\exp[za_{+}-z^{*}a_{-}],\quad t\in\rrb,\;\xi\in\ccb
\end{equation*}
is unitary   and $D(z^{*})=D(z)^{*}=D(-z)=D(z)^{-1}$; moreover, fixing $\xi\in\ccb$ we have $D(t,\xi)$ to be a group as $t\in\rrb$. The same refers to the {\em squeeze} operator
\begin{equation*}
S(z)=S(t,\xi)\okr\exp[\I t A_{\xi}^{(2)}]=\exp[za_{+}^{2}-z^{*}a_{-}^{2}],\quad t\in\rrb,\;\xi\in\ccb.
\end{equation*}

\subsection{$k=1$; the displacement operator}
\subsubsection{Reviving the models}\label{s1.1.02}
Because the splitting Corollary \ref{t1.7.07}  is not present when $k=1$ ($A^{(1)}_{\xi}=A^{(1,0)}_{\xi}$) the way of proposing  notable expression for the displacement operator can be done just by ``exponentiation'' in the corresponding function space, so to speak. In particular we have at our disposal the following models
\begin{enumerate}
\item[(a)] the $\llc^{2}(\rrb)$ representation (``configuration space'');
\item[(b)] the Segal-Bargmann representation;
\item[(c)] discrete representation by which we mean a one parameter family of harmonic oscillators acting on Charlier sequences considered in $\ell^{2}$, cf. \cite{yet};
\item[(d)] the one parameter family of holomorphic oscillators as done in \cite{anal} (see also \cite{gaz}), which interpolates the models (a) and (b).
\end{enumerate}

Let us take a brisk look at the case (b) as the most analytic one. The orthonormal basis in the Segal-Barmann space is $\big(\ulamek{z^{n}}{\sqrt{n!}}\big)_{n=0}^{\infty}$. Roughly here $D(t,\xi)=\exp[t(\xi z-\xi^{*}\partial_{z})]$ and the unitary equivalence between $\hhc$ and the Segal-Bargmann space is established by
\begin{equation*}
e_{n}\mapsto \ulamek{z^{n}}{\sqrt{n!}},\quad n
=0,1,\ldots,
\end{equation*}
and causes $D(t,\xi)$ to act as a Jacobi operator. Because $\ddc=\lin\big(\ulamek{z^{n}}{\sqrt{n!}}\big)_{n=0}^{\infty}$ is the set of entire vectors for $A^{(1)}_{\xi}$ formula \eqref{1.9.07} applies
\begin{equation*}
D(t,\xi)f=\sum_{n=0}^{\infty}\frac{(\I t)^{n}}{n!}(\xi z-\xi^{*}\partial_{z})^{n}f,\quad f\in\lin\big(\ulamek{z^{n}}{\sqrt{n!}}\big)_{n=0}^{\infty}.
\end{equation*}
In the case of abstract Hilbert space the formula \eqref{3Oct13-1a} (which simplifies substantially as $k=1$ and $i=0$) combined with the the formula \eqref{1.9.07} establishes a series representation of the displacement operator.

\subsection{$k=2$; the squeeze operator}\label{s1.07.11}
The squeeze operator is defined as
\begin{equation*}
S(t,\xi)\okr\exp[\I t A_{\xi}^{(2)}]=\exp[t(\xi a_{+}^{2}-\xi^{*}a_{-}^{2})],\quad t\in\rrb,\;\xi\in\ccb
\end{equation*}
 Benefitting from Corollary \ref{t1.7.07}  the orthogonal splitting\,\footnote{\; It corresponds somehow to what is in \cite{man} where odd and even coherent states are considered.} $\hhc=\hhc^{(2,0)}\bigoplus\hhc^{(2,1)}$ which generates that of the operator $A^{(2)}$ as $A^{(2)}=A^{(2,0)}\bigoplus A^{(2,1)}$ and consequently $\exp[A^{(2)}]=\exp[A^{(2,0)}]\bigoplus\exp[A^{(2,1)}]$ clarifies the picture. However, instead of being in the space $\hhc$  the harmonic oscillator acts in we go a step further in modeling the action of the squeeze operator. More precisely, we duplicate the model in a way which is parallel to correspondence: ``configuration space'' $\mapsto$ an analogue of the  Segal-Bargmann space. Here ``$\mapsto$'' has the appearance of a kind of Segal-Bargmann transform; another occasion when the Segal-Bargmann transform appears in connection of squeezed states  is in a recent paper \cite{ali}.
\begin{center}
\boxed{\text{In all what follows $\lambda=\ulamek14$ governs   $A^{(2,0)}$ while  $\lambda=\ulamek34$ does   $A^{(2,1)}$.}}
\end{center}

\subsubsection{$A^{(2,i)}$ as Jacobi operators in $\llc^{2}(\rrb)$}  From the normalized  polynomials $p^{(\lambda)}_{n}$ already  given by  \eqref{1.19.02} we pass to the  {\em Meixner-Pollaczek functions }
\begin{equation*}
\mathfrak{p}^{(\lambda)}_{n}(x) \okr \big\vert\Gamma(\lambda+\I x)\big\vert p^{(\lambda)}_{n}(x),
\end{equation*}
which, due the formula (9.7.2) in \cite{koe},  satisfy the following orthogonality relation
\begin{equation*}
\int_{-\infty}^{\infty} \mathfrak{p}^{(\lambda)}_{n}(x) \overline{\mathfrak{p}^{(\lambda)}_{m}(x)} \D x = \delta_{n, m}.
\end{equation*}
The sequence $(\mathfrak{p}^{(\lambda)}_{n})_{n=0}^{\infty}$ is therefore orthonormal in
 $\mathcal{L}^{2}(\mathbb{R})$, each of the operators $A^{(2,i)}_{\xi}$, $i=0,1$, acts in the Hilbert space $\clolin(\mathfrak{p}^{(\lambda)}_{n})_{n=0}^{\infty}$ which is a subspace of $\llc^{2}(\rrb)$.

\subsubsection{$A^{(2,i)}$ as  multiplication operators in the space of the Segal-Bargmann type}\label{sb}Let us introduce the Hilbert space $\mathcal{H}_{\lambda}[\mathbb{C}; \nu(|z|) \D z]$, with 
\begin{equation}\label{4.11.13-6}
\nu(r) = \left[\frac{2^{\lambda}}{\pi\Gamma(2\lambda)}\right]^{2} r^{2\lambda-1} K_{2\lambda-1}(2 r),\quad r\in[0,+\infty).
\end{equation}
The basis in  $\mathcal{H}_{\lambda}[\mathbb{C}; \nu(|z|) \D z]$ is formed by the monomials
\begin{equation}\label{4.11.13-3}
\varPhi_{\lambda, n}(z) \okr 2^{-\lambda} \sqrt{2\pi}\, \Gamma(2\lambda) \frac{(-\I z)^{n}}{\sqrt{n! \Gamma(n + 2\lambda)}}, \quad z\in\mathbb{C}.
\end{equation}
\begin{lem}\label{4.11.13-4}
\begin{equation}\label{4.11.13-5}
\int_{\mathbb{C}} \varPhi_{\lambda, n}(z) \overline{\varPhi_{\lambda, m}(z)} \nu(|z|) \D z = \delta_{n, m}, \quad z\in\ccb,
\end{equation}
\end{lem}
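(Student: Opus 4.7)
The plan is to reduce the two-dimensional integral to a product of an angular factor, which produces the Kronecker delta, and a radial factor, which the Macdonald function is tailored to evaluate explicitly. First I would pass to polar coordinates $z = r\E^{\I\phi}$ with $\D z = r\,\D r\,\D \phi$ and observe that
\begin{equation*}
\varPhi_{\lambda,n}(z)\overline{\varPhi_{\lambda,m}(z)} = \bigl(2^{-\lambda}\sqrt{2\pi}\,\Gamma(2\lambda)\bigr)^{2}\frac{(-\I)^{n}\I^{m}}{\sqrt{n!\,m!\,\Gamma(n+2\lambda)\Gamma(m+2\lambda)}}\, r^{n+m}\E^{\I(n-m)\phi}.
\end{equation*}
The angular integral $\int_{0}^{2\pi}\E^{\I(n-m)\phi}\D\phi = 2\pi\,\delta_{n,m}$ immediately forces $n=m$ (in which case $(-\I)^{n}\I^{n}=1$) and eliminates the off-diagonal terms, so only the diagonal integral needs to be checked.

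For $n=m$ the problem reduces to evaluating
\begin{equation*}
\frac{2}{\pi\, n!\,\Gamma(n+2\lambda)}\cdot 2\pi\int_{0}^{\infty} r^{2n+2\lambda}\, K_{2\lambda-1}(2r)\,\D r,
\end{equation*}
where the constant $\tfrac{2}{\pi}$ comes from cancelling $2^{-\lambda}\cdot 2^{\lambda}$ and $\Gamma(2\lambda)^{2}/\Gamma(2\lambda)^{2}$ in the prefactors of $|\varPhi_{\lambda,n}|^{2}$ and $\nu$. The radial integral is a classical Mellin transform of the Macdonald function: after the change of variable $u=2r$ one obtains
\begin{equation*}
\int_{0}^{\infty} r^{s-1}K_{\nu}(2r)\,\D r = \tfrac{1}{4}\,\Gamma\!\left(\tfrac{s+\nu}{2}\right)\Gamma\!\left(\tfrac{s-\nu}{2}\right),
\end{equation*}
which, applied with $s=2n+2\lambda+1$ and $\nu=2\lambda-1$, yields $\tfrac{1}{4}\Gamma(n+2\lambda)\Gamma(n+1) = \tfrac{1}{4}\,n!\,\Gamma(n+2\lambda)$. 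Substituting this back cancels the prefactor exactly and gives $1$, which confirms \eqref{4.11.13-5}.

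The only real subtlety, and where I would be careful, is in verifying that the constants line up on the nose: the factors $2^{\pm\lambda}$, $\Gamma(2\lambda)$, $\sqrt{2\pi}$ coming from \eqref{4.11.13-3} and from $\nu(r)$ in \eqref{4.11.13-6} conspire with the $\tfrac14$ and $\Gamma(n+2\lambda)\Gamma(n+1)$ from the Mellin integral. Apart from that bookkeeping, the argument is essentially algebraic, relying on a standard identity for $\int r^{s-1}K_{\nu}(2r)\D r$, which one can either quote from a table (e.g.\ Gradshteyn--Ryzhik) or derive from the integral representation $K_{\nu}(2r)=\tfrac12\int_{0}^{\infty}t^{\nu-1}\exp[-t-r^{2}/t]\D t$ combined with Fubini and the Gamma function definition.
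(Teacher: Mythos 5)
Your proposal is correct and follows essentially the same route as the paper: substitute the explicit formulas for $\varPhi_{\lambda,n}$ and $\nu$, separate the angular integral (which yields $\delta_{n,m}$), and evaluate the radial integral via the Mellin transform $\int_{0}^{\infty}x^{\alpha-1}K_{\nu}(cx)\,\D x=2^{\alpha-2}c^{-\alpha}\Gamma(\tfrac{\alpha+\nu}{2})\Gamma(\tfrac{\alpha-\nu}{2})$, which is precisely the Prudnikov formula the authors cite. Your constant bookkeeping checks out ($\tfrac{4}{n!\,\Gamma(n+2\lambda)}\cdot\tfrac14\,n!\,\Gamma(n+2\lambda)=1$), so nothing is missing.
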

\begin{proof}
Lemma \ref{4.11.13-4} can be shown by substituting formulas \eqref{4.11.13-3} and \eqref{4.11.13-6} into \eqref{4.11.13-5} and using \cite[vol.2, formula (2.16.2.2), p. 343]{APPrudnikov} \footnote{$\int_{0}^{\infty} x^{\alpha-1}K_{\nu}(cx) \D x = 2^{\alpha-2} c^{-\alpha} \Gamma(\ulamek{\alpha+\nu}{2}) \Gamma(\ulamek{\alpha-\nu}{2})$. By the way, the formula (3.26)  in \cite{Barut} is incorrect  which is irrelevant for the rest of that paper.} 
\end{proof}

 \subsubsection*{From $\mathcal{L}^{2}(\mathbb{R})$ to  $\mathcal{H}_{\lambda}[\mathbb{C}; \nu(|z|) \D z]$ {\em \`a la} Segal-Bargmann}
 $\mathcal{H}_{\lambda}[\mathbb{C}; \nu(|z|) \D z]$ is a reproducing kernel Hilbert space with the  kernel calculated for $\varPhi_{\lambda, n}$ as 
\begin{align*}
K^{(\lambda)}(t, \tau) & = \sum_{n=0}^{\infty} \varPhi_{\lambda, n}(t) \overline{\varPhi_{\lambda, n}(\tau)} = \frac{2\pi}{2^{2\lambda}} \Gamma^{2}(2\lambda) \sum_{n} \frac{(-\I t)^{n} (\I\bar{\tau})^{n}}{n! \Gamma(n + 2\lambda)} \\
& = \frac{2\pi}{2^{2\lambda}} \Gamma^{2}(2\lambda) (t\bar{\tau})^{-\lambda+1/2} I_{2\lambda-1}(2\sqrt{t\bar{\tau}}),\quad t,\tau\in\ccb.
\end{align*}
where $I_{\alpha}$ is the modified Bessel function of the first kind. This is one more kernel from which, applying the procedure developed in \cite{ho1,ho2}, one may get a new class of coherent states.

Let us find the unitary mapping of $\mathcal{L}^{2}(\mathbb{R})$ onto  $\mathcal{H}_{\lambda}[\mathbb{C}; \nu(|z|) \D z]$. We start with the formula\,\footnote{\label{poh}
\;{\em Pochhammer symbol} called sometimes  {\em shifted factorial} (see \cite[p. 4]{koe}) employed here.} for the generating function for the Meixner-Pollaczek polynomials, see \cite[formula (9.7.12)]{koe}:
\begin{align*}
\sum_{n=0}^{\infty} \frac{(-\I z)^{n}}{(2\lambda)_{n}} P^{(\lambda)}_{n}(x; \ulamek{\pi}{2}) &= \E^{z} {_{1}F_{1}}(\lambda+\I x; 2\lambda; -2z)  = z^{-\lambda} M_{-\I x, \lambda -1/2}(-2 z), \quad z\in\mathbb{C},
\end{align*} 
where $M_{\sigma, \nu}$ is the Whittaker function\label{wit}. Expressing $P^{(\lambda)}_{n}(x, \ulamek{\pi}{2})$ in terms of $\mathfrak{p}^{(\lambda)}_{n}(x)$ we get
\begin{align*}
\sum_{n=0}^{\infty} \varPhi_{\lambda, n}(z) \overline{\mathfrak{p}^{(\lambda)}_{n}(x)} &= \E^{z} \big\vert\Gamma(\lambda+\I x)\big\vert {_{1}F_{1}}(\lambda+\I x; 2\lambda; -2z) = (2 z)^{-\lambda} \big\vert\Gamma(\lambda+\I x)\big\vert M_{-\I x, \lambda-1/2}(-2z)
\end{align*}
and determine the transformation $\ggf_{\lambda}$ which sends $ \mathfrak{p}^{(\lambda)}_{n}$ to $\varPhi_{\lambda, n}$ as an integral one
\begin{equation}\label{4.11.13-10}
\varPhi_{\lambda, n}(z) = (\ggf_{\lambda} \mathfrak{p}^{(\lambda)}_{n})(z) = \int_{-\infty}^{\infty} G_{\lambda}(\bar{x}, z) \mathfrak{p}^{(\lambda)}_{n}(x) \D x
\end{equation}
with the kernel
\begin{equation*}
G_{\lambda}(\bar{x}, z) = (2 z)^{-\lambda} \big\vert\Gamma(\lambda+\I x)\big\vert M_{-\I x, \lambda-1/2}(-2z), \quad x\in\mathbb{R}, \quad z\in\mathbb{C}.
\end{equation*}
\begin{lem}\label{4.11.13-13}
\begin{equation}\label{4.11.13-14}
\int_{-\infty}^{\infty} G_{\lambda}(\bar{x}, z) \overline{G_{\lambda}(\bar{x}, z)} \D x = \frac{2\pi}{2^{2\lambda}}  \Gamma^{2}(2\lambda) |z|^{-2\lambda + 1} I_{2\lambda-1}(2|z|) = K^{(\lambda)}(z, \bar{z}). 
\end{equation}
\end{lem}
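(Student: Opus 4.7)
The plan is to reduce the integral to a diagonal sum of squared coefficients via the generating-function identity already established in Subsection 5.2.3, and then identify that sum with the closed-form reproducing kernel $K^{(\lambda)}$ computed just above the lemma. Concretely, the relation displayed right before \eqref{4.11.13-10} can be rewritten as
\begin{equation*}
G_{\lambda}(\bar{x},z)=\sum_{n=0}^{\infty}\varPhi_{\lambda,n}(z)\,\overline{\mathfrak{p}^{(\lambda)}_{n}(x)},
\end{equation*}
so $G_{\lambda}(\bar{\,\cdot\,},z)$ is nothing other than the Fourier expansion, in the orthonormal Meixner--Pollaczek system $(\mathfrak{p}^{(\lambda)}_{n})_{n=0}^{\infty}$ in $\llc^{2}(\rrb)$, whose coefficients are $\varPhi_{\lambda,n}(z)$.

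First I would substitute this expansion (and its complex conjugate) into the left-hand side of \eqref{4.11.13-14}, interchange summation and integration, and apply the orthogonality relation $\int_{\rrb}\mathfrak{p}^{(\lambda)}_{n}(x)\overline{\mathfrak{p}^{(\lambda)}_{m}(x)}\D x=\delta_{n,m}$ recalled just before \eqref{4.11.13-10}. This collapses the double sum to
\begin{equation*}
\int_{\rrb}|G_{\lambda}(\bar{x},z)|^{2}\D x=\sum_{n=0}^{\infty}|\varPhi_{\lambda,n}(z)|^{2}.
\end{equation*}
By the very definition of $\varPhi_{\lambda,n}$ in \eqref{4.11.13-3} we have $|\varPhi_{\lambda,n}(z)|^{2}=\tfrac{2\pi}{2^{2\lambda}}\Gamma^{2}(2\lambda)\tfrac{|z|^{2n}}{n!\,\Gamma(n+2\lambda)}$, and the series $\sum_{n}\tfrac{|z|^{2n}}{n!\,\Gamma(n+2\lambda)}$ is precisely $|z|^{-2\lambda+1}I_{2\lambda-1}(2|z|)$ by the Taylor expansion of the modified Bessel function $I_{\nu}$; this yields the first equality in \eqref{4.11.13-14}. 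The second equality then follows by comparison with the closed form for $K^{(\lambda)}$ already computed immediately after \eqref{4.11.13-5}, evaluated on the ``diagonal'' $t=z$, $\tau=\bar z$ (so that $t\bar\tau=z\bar z=|z|^{2}$).

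The only non-routine step will be justifying the exchange of summation and integration. For this I would invoke Parseval's identity in $\llc^{2}(\rrb)$ directly: since the coefficients $\varPhi_{\lambda,n}(z)$ are square-summable in $n$ for each fixed $z$ (which one checks from \eqref{4.11.13-3} using Stirling, as $\sum_{n}\frac{|z|^{2n}}{n!\,\Gamma(n+2\lambda)}<\infty$ for all $z$), the series defines an element of $\llc^{2}(\rrb)$, and Parseval yields $\|G_{\lambda}(\bar{\,\cdot\,},z)\|^{2}_{\llc^{2}(\rrb)}=\sum_{n}|\varPhi_{\lambda,n}(z)|^{2}$ without any Fubini acrobatics. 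Everything else is either cited (the generating function from \cite{koe}, the series for $I_{\nu}$) or a direct substitution.
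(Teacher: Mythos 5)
Your argument is correct, and it takes a genuinely different route from the paper's. The paper proves \eqref{4.11.13-14} by direct computation: it evaluates the integral of a product of two Whittaker functions by citing two tabulated formulae from Prudnikov--Brychkov--Marichev and then passes to a limit $\rho\to0$ in the resulting expression. You instead read the generating-function identity displayed just before \eqref{4.11.13-10} as the Fourier expansion of $G_{\lambda}(\bar{x},z)$ (as a function of $x$) in the orthonormal system $(\mathfrak{p}^{(\lambda)}_{n})_{n}$ with coefficients $\varPhi_{\lambda,n}(z)$, and invoke Parseval to collapse the integral to $\sum_{n}|\varPhi_{\lambda,n}(z)|^{2}$, which is the diagonal of the reproducing kernel computed above the lemma. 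Your route is shorter, requires no special-function tables beyond what Section 5 already uses, and makes the equality with $K^{(\lambda)}$ conceptually transparent: it is the standard fact that the squared norm of the ``coherent-state'' vector equals the kernel on the diagonal. Two small points should be tidied up, neither of which affects validity. First, Parseval gives the norm of the $\llc^{2}$-sum of the series; to conclude you must still identify that sum almost everywhere with the closed Whittaker-function expression defining $G_{\lambda}(\bar{x},z)$ --- since the series also converges pointwise by the cited generating function, a subsequence of partial sums converges a.e.\ to the $\llc^{2}$-limit and the two agree. Second, your parenthetical ``$t=z$, $\tau=\bar z$, so that $t\bar\tau=|z|^{2}$'' is literally $t\bar\tau=z\cdot\overline{\bar z}=z^{2}$; what you (and the paper's own notation $K^{(\lambda)}(z,\bar z)$) mean is the diagonal evaluation $t=\tau=z$, which indeed gives $t\bar\tau=|z|^{2}$ and the stated Bessel expression.
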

\begin{proof}
Using   \cite[vol. 3, formulae (7.2.2.15), p. 435 and (2.19.28.2), p. 211]  {APPrudnikov}
   we get
\begin{align*}\begin{split}
\int_{-\infty}^{\infty} G_{\lambda}(\bar{x}, z) \overline{G_{\lambda}(\bar{x}, z)} \D x &=  (2 r)^{-2 \lambda} \int_{-\infty}^{\infty} \big\vert\Gamma(\lambda+\I x)\big\vert^{2} M_{-\I x, \lambda-1/2}\, (-2z) M_{\I x, \lambda-1/2}(-2\bar{z}) \D x \\
& = (2 r)^{-2 \lambda} \E^{\I\pi\lambda}\int_{-\infty}^{\infty} \big\vert\Gamma(\lambda+\I x)\big\vert^{2} M_{\I x, \lambda-1/2}(2 z) M_{\I x, \lambda-1/2}(-2 \bar{z}) \D x \\
& = \pi (2 r)^{-2 \lambda + 1} \Gamma^{2}(2\lambda) \lim_{\rho\to 0} \frac{\exp[-(z-\bar{z})\tanh(\rho)]}{\cosh(\rho)} \I^{2\lambda-1}J_{2\lambda-1}\big(\ulamek{2\I r}{\cosh(\rho)}\big) \\
& = \frac{2\pi}{2^{2\lambda}}  \Gamma^{2}(2\lambda) r^{-2\lambda + 1} \I^{2\lambda-1}J_{2\lambda-1}(2\I r) = \frac{2\pi}{2^{2\lambda}}  \Gamma^{2}(2\lambda) r^{-2\lambda + 1} I_{2\lambda-1}(2r).\qedhere \end{split}
\end{align*}
\end{proof}

\subsubsection*{From  $\mathcal{H}_{\lambda}[\mathbb{C}; \nu(|z|) \D z]$ back to $\mathcal{L}^{2}(\mathbb{R})$; unitarity of $\ggf_{\lambda}$.}

This can be proved by showing that range of $\ggf_{\lambda}$ is dense in  $\mathcal{H}_{\lambda}[\mathbb{C}; \nu(|z|) \D z]$. Let us take the function $f_{q}\in\mathcal{L}^{2}(\mathbb{R})$, of the form $f_{\bar{q}}(w) = \overline{G_{\lambda}(\bar{q}, w)}$. Then \eqref{4.11.13-10} and \eqref{4.11.13-14} gives
\begin{align*}
\int_{-\infty}^{\infty} G_{\lambda}(\bar{q}, z) f_{\bar{q}}(w) \D q = \int_{-\infty}^{\infty} G_{\lambda}(\bar{q}, z) \overline{G_{\lambda}(\bar{q}, w)} \D q = K^{(\lambda)}_{\bar{w}}(z).
\end{align*}
The functions $K^{(\lambda)}_{\bar{w}}(z)$ are complete in $\mathcal{H}_{\lambda}$, this finishes the proof of unitarity of $G_{\lambda}$.

Define the operator $\funk W{ \mathcal{H}_{\lambda}[\mathbb{C}; \nu(|z|) \D z}{\mathcal{L}^{2}(\mathbb{R})}$ by
\begin{equation*}
(Wf)(x) = \int_{\mathbb{C}} \overline{G_{\lambda}(\bar{x}, w)} f(w) \nu(|w|) dw, \quad w\in\ccb.
\end{equation*}
For $g = \ggf_{\lambda}(W f) \in\mathcal{H}_{\lambda}$, we get
\begin{align*}
g(z) &= \int_{-\infty}^{\infty} G_{\lambda}(x, \bar{z}) \left[\int_{\mathbb{C}} \overline{G_{\lambda}(\bar{x}, w)} f(w) \nu(|w|) \D w\right] \D x 
= \int_{\mathbb{C}} f(w) \left[\int_{-\infty}^{\infty} G_{\lambda}(x, \bar{z}) \overline{G_{\lambda}(\bar{x}, w)} \D x\right] \nu(|w|) \D w \\
& = \int_{\mathbb{C}} f(w) K_{\lambda}(\bar{z}, w) \nu(|w|) \D w = f(z)
\end{align*}
which means $W$ is the inverse of $\ggf$.

The integral kernel corresponding to $G^{-1}_{\lambda}$ is given by
\begin{align*}
G^{-1}_{\lambda}(\bar{x}, z) = \overline{G_{\lambda}(\bar{x}, z)} &= (2\bar{z})^{-\lambda} \big\vert\Gamma(\lambda-\I x)\big\vert M_{\I x, \lambda-1/2}(-2\bar{z}) 
= (2\bar{z})^{-\lambda} \big\vert\Gamma(\lambda+\I x)\big\vert M_{\I x, \lambda-1/2}(-2\bar{z}) .
\end{align*}

\subsubsection*{The image of the  Jacobi operator $A^{{(2)}}_{\xi}$ in  $\mathcal{H}_{\lambda}[\mathbb{C}; \nu(|z|) \D z]$}

\begin{lem}\label{4.11.13-18}
\begin{equation}\label{mult}
G_{\lambda} \left(x\mathfrak{p}^{(\lambda)}_{n}(x)\right)(z) = \I\frac{1-z^{2}}{2 z} \varPhi_{\lambda, n}(z),\quad z\in\ccb.
\end{equation}
\end{lem}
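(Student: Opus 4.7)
My plan is to use the three-term recurrence relation \eqref{2.19.02} satisfied by the normalized Meixner--Pollaczek polynomials, pair it with the linearity of $G_\lambda$ and the identification $G_\lambda(\mathfrak{p}^{(\lambda)}_n) = \varPhi_{\lambda,n}$ from \eqref{4.11.13-10}, and then use the explicit monomial form \eqref{4.11.13-3} of $\varPhi_{\lambda,n}$ to collect the result into a single $z$-multiplier of $\varPhi_{\lambda,n}(z)$.

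First, since $\mathfrak{p}^{(\lambda)}_n = |\Gamma(\lambda+\I x)|\,p^{(\lambda)}_n$ inherits the three-term recurrence of $p^{(\lambda)}_n$ from \eqref{2.19.02}, we get
\begin{equation*}
x\,\mathfrak{p}^{(\lambda)}_n(x) = \tfrac{1}{2}\sqrt{(n+1)(n+2\lambda)}\,\mathfrak{p}^{(\lambda)}_{n+1}(x) + \tfrac{1}{2}\sqrt{n(n+2\lambda-1)}\,\mathfrak{p}^{(\lambda)}_{n-1}(x).
\end{equation*}
Applying the linear $G_\lambda$ and invoking $G_\lambda(\mathfrak{p}^{(\lambda)}_n)=\varPhi_{\lambda,n}$ transports this into
\begin{equation*}
G_\lambda(x\,\mathfrak{p}^{(\lambda)}_n)(z) = \tfrac{1}{2}\sqrt{(n+1)(n+2\lambda)}\,\varPhi_{\lambda,n+1}(z) + \tfrac{1}{2}\sqrt{n(n+2\lambda-1)}\,\varPhi_{\lambda,n-1}(z).
\end{equation*}
From \eqref{4.11.13-3} one reads off the shift ratios
\begin{equation*}
\frac{\varPhi_{\lambda,n+1}(z)}{\varPhi_{\lambda,n}(z)} = \frac{-\I z}{\sqrt{(n+1)(n+2\lambda)}},\qquad \frac{\varPhi_{\lambda,n-1}(z)}{\varPhi_{\lambda,n}(z)} = \frac{\sqrt{n(n+2\lambda-1)}}{-\I z},
\end{equation*}
so the first summand contributes $-\tfrac{1}{2}\I z\,\varPhi_{\lambda,n}(z)$, the second contributes a multiple of $z^{-1}\varPhi_{\lambda,n}(z)$, and the two are meant to combine over the common denominator $2z$ into $\I(1-z^{2})/(2z)\cdot\varPhi_{\lambda,n}(z)$.

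The step that needs care is the second summand: the recurrence coefficient $\sqrt{n(n+2\lambda-1)}$ multiplies the ratio $\sqrt{n(n+2\lambda-1)}/(-\I z)$, and one must verify that the resulting numerator collapses to the $n$-independent constant $1$ rather than to an $n$-dependent quantity. This is the arithmetic crux of the lemma and relies crucially on the particular normalization chosen in \eqref{4.11.13-3}. The boundary case $n=0$ calls for a separate sanity check: the $\varPhi_{\lambda,n-1}$ term is absent from the recurrence, so the apparent $1/z$ pole on the right-hand side of \eqref{mult} is not actually produced, and indeed $G_\lambda(x\mathfrak{p}^{(\lambda)}_0) = \tfrac{\sqrt{2\lambda}}{2}\,\varPhi_{\lambda,1}$ is a genuine polynomial in $z$.
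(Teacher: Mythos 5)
You have reproduced the paper's route exactly --- transport the three-term recurrence \eqref{2.19.02} through the linear map $\ggf_{\lambda}$, use $\ggf_{\lambda}\mathfrak{p}^{(\lambda)}_{n}=\varPhi_{\lambda,n}$, and convert the shifted basis elements back to $\varPhi_{\lambda,n}$ via \eqref{4.11.13-3} --- but you have deferred precisely the computation on which the lemma hinges, and that computation does not close. With your (correct) ratio $\varPhi_{\lambda,n-1}(z)/\varPhi_{\lambda,n}(z)=\sqrt{n(n+2\lambda-1)}/(-\I z)$, the second summand is
\begin{equation*}
\ulamek{1}{2}\sqrt{n(n+2\lambda-1)}\,\varPhi_{\lambda,n-1}(z)
=\ulamek{1}{2}\,\frac{n(n+2\lambda-1)}{-\I z}\,\varPhi_{\lambda,n}(z)
=\I\,\frac{n(n+2\lambda-1)}{2z}\,\varPhi_{\lambda,n}(z),
\end{equation*}
so the numerator is $n(n+2\lambda-1)$, not the constant $1$ (for $n=1$ it equals $\ulamek12$ when $\lambda=\ulamek14$ and $\ulamek32$ when $\lambda=\ulamek34$). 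The two summands therefore combine into $\bigl(-\ulamek{\I z}{2}+\I\ulamek{n(n+2\lambda-1)}{2z}\bigr)\varPhi_{\lambda,n}(z)$, a genuinely $n$-dependent diagonal action, not multiplication by the fixed function $\I(1-z^{2})/(2z)$. Your own $n=0$ ``sanity check'' already exposes the problem rather than resolving it: the left-hand side of \eqref{mult} is $\ulamek{\sqrt{2\lambda}}{2}\varPhi_{\lambda,1}(z)$, a multiple of $z$, while the right-hand side $\ulamek{\I}{2}\bigl(\ulamek1z-z\bigr)\varPhi_{\lambda,0}(z)$ has a genuine simple pole at $z=0$ because $\varPhi_{\lambda,0}$ is a nonzero constant; the two sides cannot coincide.

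For what it is worth, the step you isolated as ``the arithmetic crux'' is exactly the step the paper's own proof passes over without computation: it asserts $\ulamek12\sqrt{n(n+2\lambda-1)}\,\varPhi_{\lambda,n-1}(z)=\ulamek{\I}{2z}\varPhi_{\lambda,n}(z)$, which is inconsistent with the normalization \eqref{4.11.13-3} --- equivalently, with the second relation of \eqref{5.11.13-1}, whose coefficient is $\sqrt{n/(n+2\lambda-1)}$ rather than $\sqrt{n(n+2\lambda-1)}$, reflecting the fact that in $\mathcal{H}_{\lambda}$ the operators $-\I z$ and $\I\ulamek{\D}{\D z}$ are not mutually adjoint. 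So your instinct to single out this verification was exactly right; actually carrying it out shows the identity \eqref{mult} cannot be established along these lines as stated, and the proof proposal as written is incomplete at its decisive point.
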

\begin{proof}
Using the recurrence relation \eqref{2.19.02} and \eqref{4.11.13-3}, we get
\begin{align*}
G_{\lambda} \left(x\mathfrak{p}^{(\lambda)}_{n}(x)\right)(z) & = \int_{-\infty}^{\infty} G_{\lambda}(\bar{x}, z) x\mathfrak{p}^{(\lambda)}_{n}(x) \D x \\
& = \int_{-\infty}^{\infty} G_{\lambda}(\bar{x}, z) \left[\ulamek{1}{2}\sqrt{(n+1)(n+2\lambda)} \mathfrak{p}^{(\lambda)}_{n+1}(x) + \ulamek{1}{2}\sqrt{n(n+2\lambda-1)} \mathfrak{p}^{(\lambda)}_{n-1}(x)\right] \D x \\
& = \ulamek{1}{2}\sqrt{(n+1)(n+2\lambda)} \varPhi_{\lambda, n+1}(z) +  \ulamek{1}{2}\sqrt{n(n+2\lambda-1)} \varPhi_{\lambda, n-1}(z) \\
& = \ulamek{-\I z}{2} \varPhi_{\lambda, n}(z) + \ulamek{\I}{2 z} \varPhi_{\lambda, n}(z) = \ulamek{\I}{2} (\ulamek{1}{z} -z) \varPhi_{\lambda, n}(z).\qedhere
\end{align*}\end{proof}

The  $\mathcal{H}_{\lambda}[\mathbb{C}; \nu(|z|) \D z]$ parallels of the other members of the harmonic oscillator family  can be  derived as follows. From \eqref{4.11.13-3} it can be proved that ``creation'' and ``annihilation'' operators with respect to the basis $(\varPhi_{\lambda, n})_{n=0}^{\infty}$ act as
\begin{equation}\label{5.11.13-1}
-\I z \varPhi_{\lambda, n}(z) = \sqrt{(n+1)(n+2\lambda)} \varPhi_{\lambda, n+1}(z), \quad  \I\frac{\D}{\D z} \varPhi_{\lambda, n}(z) = \sqrt{\frac{n}{n+2\lambda-1}} \varPhi_{\lambda, n-1}(z).
\end{equation}
Formulas \eqref{5.11.13-1} imply (``number'' operator)
\begin{equation*}
z \frac{\D}{\D z} \varPhi_{\lambda, n}(z) = n \varPhi_{\lambda, n}(z)
\end{equation*}
which shares  the appearance with that of  the classical Segal-Bargmann space.

The counterparts of \eqref{5.11.13-1} in $\llc^{2}(\rrb)$ can be get from unitarity of $G_{\lambda}$ and formulas \eqref{4.11.13-10} and \eqref{5.11.13-1}
\begin{align*}
G^{-1}_{\lambda}\left[-\I z\varPhi_{\lambda, n}(z)\right] =\sqrt{(n+1)(n+2\lambda)} \mathfrak{p}^{(\lambda)}_{n+1}(x),\quad G^{-1}_{\lambda}\left[\I \frac{\D}{\D z} \varPhi_{\lambda, n}(z)\right] = \sqrt{\frac{n}{n+2\lambda-1}} \mathfrak{p}^{(\lambda)}_{n-1}(x).
\end{align*}

\subsubsection*{Symmetricity of the multiplication operator of \eqref{mult}}

The operator $M_{\lambda}$ of multiplication\,\footnote{\;Notice that the formula for $M_{\lambda}$ is the same regardless the space it acts in; though both the space  as well as the domain depend on $\lambda$.}  by $ \I\frac{1-z^{2}}{2 z}$ in $\mathcal{H}_{\lambda}[\mathbb{C}; \nu(|z|) \D z]$  is, according to \eqref{mult}, an image of the symmetric operator (the Jacobi one)
it must be necessary symmetric too. The multiplication by a rational function with a pole at $0$ acting in a space of entire functions  may look strange at a first glance though our reasoning does not leave any doubt. However, just for disbelievers we add an alternative,  direct argument for this a little bit amazing fact.

Symmetricity of $M_{\lambda}$ means
\begin{equation*}
\int_{\mathbb{C}} \ulamek{\I}{2}\Big(\ulamek{1}{z} - z\Big) \Phi_{\lambda, n}(z) \overline{\Phi_{\lambda, m}(z)} \nu(|z|) \D z = \int_{\mathbb{C}} \Phi_{\lambda, n}(z)\, \overline{\ulamek{\I}{2}\Big(\ulamek{1}{z} - z\Big)  \Phi_{\lambda, m}(z)} \nu(|z|) \D z.
\end{equation*}
Passing to polar coordinates under the integral and using explicitly \eqref{4.11.13-6} and \eqref{4.11.13-3} gives the above equality.

\section{Concluding remarks}\label{fhsz:s1}
We have proposed a precise solution of an intriguing problem of possible generalization of higher order squeezing. As we have already pointed out in the very introduction the existing so far attempts do  not explain satisfactorily why there is a disparity between the case $k=1,2$ and that of $k\Ge 3$. What is hidden behind is the fact that a Hilbert space operator can not live without its domain being explicitly manifested. The example we have in mind is a symmetric operator and selfadjoint as well, in which case the domain makes the difference (cf. \cite{kon} or for much more particularities also \cite{fabio}). This is invisible when the notion of a Hermitian operator is the only in use, with a consequence of an automatic transplantation of the ${^{\dagger}}$ operation from finite matrices together with its algebraic properties to would-be Hilbert space operators. The typical argument: $U=\exp(\pm\I H)$ is unitary if $H$ is Hermitian, i.e. $H=H^{\dagger}$, is far from being correct as long as $H$ is not (essentially) selfadjoint -  the case of $A_{\xi}^{(3)}$ makes a strong warning here. Therefore some caution even for the trivially looking cases of $A_{\xi}^{(1)}$ and $A_{\xi}^{(2)}$ has to be undertaken  - this is  a  message  our universal approach conveys.  
Not taking into account behaviour of domains may  result in serious, troublesome problems as the paper \cite{gal} inquires into.

Although we have focused ourselves  on mathematical aspects of higher order squeezing  the paper sends also a clear message to physicists:  generalizing naively  squeezing operators to higher order fails because the out-coming operators do not obey fundamental quantum mechanical requirements postulated by von Neumann - they are "ill-defined" in the physical jargon. Moreover, this fact is by no means restricted to squeezing circumstances.  
As we
have already emphasized  the same situation one faces if studying the
$k$-photon Rabi model. The $3$-photon Rabi model has been recently suggested
\cite{mondloch} to explain the mechanism of phase locking through the spontaneous
three-photon scattering which, if confirmed experimentally, allows us to
conjecture that "ill-defined" phenomenological description may be cured in a
mathematically rigorous way and to achieve this one should look for
selfadjoint  extensions of the $3$-photon Rabi interaction.

Summing up, though our main goal has been to prove rigorously impossibility of generalizing squeezing to higher orders in a naive way, one of the benefits of our investigations is to call reader's attention to a need of being aware  how important and helpful a domain is for studying  properties of a specific operator; for the thorough discussion of the issue the chapter \cite{fabio} highly recommended.

\section{aknowledgment}
 {The authors are extremely grateful  to the  referees for their deep insight into the paper which benefited in its final version.
The third author was supported by the MNiSzW grant NN201 546438.}

\section*{Appendix. Laborious though indispensable calculations}
{\sc Proof of Lemma \ref{3Oct13-1}.}
For \eqref{3Oct13-1a}, Pochhammer symbol appears again, cf. footnote  \footnoteref{poh},
 use induction as follows\begin{align*}
& (A^{(k,i)})^{n+1} e^{(k,i)}_{p} = \sum_{r=0}^{n} \prod_{s=1}^{r} \sum_{j_{s}=s-1}^{j_{s-1}+1} \frac{(1+i+(p-r+j_{s}))_{k}}{[(1+i+pk)_{(n-2r)k}]^{-1/2}}  (-\I\E^{\I\theta})^{n-2r} A^{(k,i)} e^{(k,i)}_{p+n-2r} \\
&\quad = \sum_{r=0}^{n} \prod_{s=1}^{r} \sum_{j_{s}=s-1}^{j_{s-1}+1} \frac{(1+i+(p-r+j_{s}))_{k}}{[(1+i+pk)_{(n+1-2r)k}]^{-1/2}} (-\I\E^{\I\theta})^{n+1-2r} e^{(k,i)}_{p+n+1-2r} \\
&\quad + \sum_{r=0}^{n} \prod_{s=1}^{r} \sum_{j_{s}=s-1}^{j_{s-1}+1} \frac{(1+i+(p-r+j_{s}))_{k} (1+i+pk)_{(n-2r)k}}{\sqrt{(1+i+pk)_{(n-1-2r)k}}} (-\I\E^{\I\theta})^{n-1-2r} e^{(k,i)}_{p+n-1-2r}.
  \end{align*}
For \eqref{3Sep13-4} apply the Pythagorean law to \eqref{3Oct13-1a}.

To prove the right hand side of the inequality \eqref{1.29.01} it is enough to show $\| (A^{(k, i)})^n e^{(k, i)}_{p} \|^{2} \Le k^{n+1} [i + (p+n)k]!/(i+pk)!$. 
Indeed,
\begin{gather*}
\| (A^{(k, i)})^n e^{(k, i)}_{p} \|^{2} \Le (1+i+pk)_{nk} + \sum_{r=1}^{n} \frac{[(1+i+(p+n-2r)k)_{k}]^{2}}{(n-r+1)^{-1}} \sum_{j_{2}=1}^{n-r+1} [(1+i+(p-r+j_{2})k)_{k}]^{2} \\  
 \times \prod_{s=3}^{r} \sum_{j_{s}=s-1}^{j_{s-1}+1} [(1+i+(p-r+j_{2})k)_{k}]^{2} (1+i+pk)_{(n-2r)k} \Le \ldots \Le (1+i+pk)_{nk}\\ + \sum_{r=1}^{n} \left[\frac{(1+i+pk)_{(n-r-1)k}}{(1+i+pk)_{(n-2r)k}}\right]^{2}
 (n-r+1)^{r-1} \sum_{j_{s}=r-1}^{n-1} \left[(1+i+(p-r+j_{s})k)_{k}\right]^{2}
   \Le (1+i+pk)_{nk}\\ + k^{2} \left[\frac{(1+i+pk)_{(n-1)k}}
 {(1+i+pk)_{(n-2)k}}\right]^{2} (1+i+pk)_{(n-2)k}
 = (1+i+pk)_{nk} \left[1 + k^{n}  - \frac{k^{n+1}}{i+(p+n)k}\right]\\ \Le (1+k^{n}) (1+i+pk)_{nk} \Le 2k^{n}  (1+i+pk)_{nk}.
\end{gather*}
The left hand side of the inequality  \eqref{1.29.01} can be automatically get from \eqref{3Sep13-4}, because 
\begin{equation*}
\| (A^{(k, i)})^n e^{(k, i)}_{p}\|^{2}\geq (1+i+pk)_{nk}. 
\end{equation*}

 \subsubsection*{Further inequalities}For $p=0, 1, \ldots$ and $k\Ge 3$ we have
\begin{equation}\label{1.4.2}
\frac{(i+pk)!}{\sqrt{(i+pk-k)! (i+pk+k)!}} + \sqrt{\frac{(i+pk)!}{(i+pk+k)!}} - 1 < 0, \quad i=0, 1, \ldots.
\end{equation}

{\sc Proof of \eqref{1.4.2}}. Indeed, going with the left hand side of \eqref{1.4.2} on
\begin{align*}
&\prod_{j=0}^{k-1} \left(\frac{kp-j+i}{kp+j+1+i}\right)^{\frac{1}{2}} + \prod_{j=0}^{k-1} \frac{1}{\sqrt{kp+j+1+i}} - 1 < \left(\frac{kp+i}{kp+1+i}\right)^{k/2} + \left(\frac{1}{kp+1+i}\right)^{k/2} - 1 \\
&  < 1 - \frac{k}{2(kp+1+i)} + \frac{1}{kp+1+i} - 1 < - \frac{k-2}{2(kp+1+i)} 
\end{align*}
which  makes \eqref{1.4.2}; here the assumption  $k\Ge 3$ is \underbar{essential}.

 For $p=0, 1, \ldots$, $k=1, 2, \ldots$, and $i=0, 1, \ldots, k-1$
\begin{equation} \label{2.4.2}
\frac{(i+pk-k)!}{\sqrt{(i+pk)! (i+pk-2k)!}} - \frac{(i+pk)!}{\sqrt{(i+pk-k)! (i+pk+k)!}} < 0.
\end{equation}

{\sc Proof of \eqref{2.4.2}}.
Proceeding as in the proof of \eqref{1.4.2} we get
\begin{align*}
&\prod_{j=0}^{k-1} \left(\frac{kp-k-j+i}{kp-k+j+1+i}\right)^{\frac{1}{2}}  - \prod_{j=0}^{k-1} \left(\frac{kp-j+i}{kp+j+1+i}\right)^{\frac{1}{2}}  < \left(\frac{kp-k+i}{kp-k+1+i}\right)^{k/2} -  \left(\frac{kp+i}{kp+1+i}\right)^{k/2} \\
& < \left(1 - \frac{1}{kp-k+1+i}\right)^{k/2} - \left(1 - \frac{1}{kp+1+i}\right)^{k/2}\
 \\
 &=\left[\left(1 - \frac{1}{kp-k+1+i}\right)^{k/2} + \left(1 - \frac{1}{kp+1+i}\right)^{k/2}\right]^{-1}\left[ \left(1 - \frac{1}{kp-k+1+i}\right)^{k} - \left(1 - \frac{1}{kp+1+i}\right)^{k}\right]
\\
&\Le\frac1{\sqrt 2}\left[ \left(1 - \frac{1}{kp-k+1+i}\right) - \left(1 - \frac{1}{kp+1+i}\right)\right]\,\sum_{l=0}^{k-1}\left(1 - \frac{1}{kp-k+1+i}\right)^{l}  \left(1 - \frac{1}{kp+1+i}\right)^{k-l-1}<0.
\end{align*}
which implies \eqref{2.4.2}.

\bibliographystyle{amsplain}

\end{document}